\pdfoutput=1
\documentclass[11pt]{article}
\usepackage[margin=1in]{geometry}
\usepackage{amsmath,amssymb,amsthm}
\usepackage{booktabs}
\usepackage{enumitem}
\usepackage{flafter}
\usepackage{graphicx}
\usepackage{placeins}
\usepackage[hidelinks]{hyperref}
\setlist[itemize]{leftmargin=1.4em}
\setlist[enumerate]{leftmargin=1.6em}
\newcommand{\KL}{D}
\newcommand{\E}{\mathbb{E}}
\newcommand{\Nseq}{N}
\newcommand{\model}{\mathcal{M}_{d,L}}
\newtheorem{proposition}{Proposition}
\newtheorem{remark}{Remark}
\usepackage{xcolor}
\newcommand{\cstar}{c^{\star}}
\newcommand{\batchreg}{R^{\mathrm{B}}} % batch regret (frees B for B(c,alpha))
\newcommand{\price}{\rho}            % price per bit of description (was A(c))
\newcommand{\rate}{I}                % Legendre transform of log Gamma(1+s) (was Lambda^*)
\newcommand{\op}{\omega}             % order parameter (was phi(theta))
\newcommand{\Mlab}{M_{\mathrm{lab}}} % relabeling count (was M(theta_0))
\usepackage{tikz}
\usetikzlibrary{decorations.pathreplacing,arrows.meta}
\title{A Layered Simplex Architecture for Large Alphabets}

\author{
  Meir Feder and Yaniv Fogel\\
  School of Electrical and Computer Engineering\\
  Tel Aviv University, Tel Aviv, Israel\\
  \texttt{meir@tau.ac.il}, \texttt{yaniv.fogel8@gmail.com}
  \and
  Ruediger Urbanke\\
  School of Computer and Communication Sciences\\
  EPFL, Lausanne, Switzerland\\
  \texttt{rudiger.urbanke@epfl.ch}
}

\date{\today}

\begin{document}
\maketitle
\begin{abstract}
Probability estimation over large alphabets under log loss is a well-studied problem, with celebrated methods such as the Good--Turing estimator. We introduce and study a new Bayesian estimator with four notable properties. First, its construction is exceptionally simple: multiply independent uniform draws from the probability simplex coordinate-wise and renormalize. Depth is the only structural parameter, and averaging over depths eliminates the need to tune it. Second, the regret of the resulting mixture, the excess code length it pays
relative to a code that knows the source, admits an explicit and efficiently
computable expression. Third, despite its simplicity and lack of tuned constants, the estimator is competitive across a diverse set of synthetic and real-text benchmarks with substantially more specialized methods, including Good--Turing.
Fourth, the tractability of its regret allows us to identify scaling
laws in data, alphabet size, and depth. For Zipf targets with exponent
above one, the regret has a simple reading as long as the
sample reveals only a small fraction of the alphabet. It closely matches the
description length of the set of discovered symbols, at one bit of code per
bit of description, plus a further cost per symbol. The data exponent is
therefore the rate at which new symbols are discovered.
\end{abstract}

\section{Introduction}
\label{sec:intro}
Three questions motivate this paper.

\emph{First}, can a simple Bayesian estimator compete in the
large-alphabet regime?
Estimating an unknown distribution under log loss is a classical problem,
but it becomes particularly difficult when the alphabet size $d$ is large relative to the available data $N$.
Simple Bayesian rules with uniform prior over the $d$ dimensional simplex leads to the add-one rule (Laplace) and with Dirichlet $1/2$ prior to the the add-half rule
(Krichevsky--Trofimov~\cite{krichevsky1981performance}); both spread probability
across the entire alphabet and can therefore perform poorly when the target
is sparse or heavy-tailed.
Good--Turing estimation~\cite{good1953population} and its variants address
this regime well, with strong guarantees in several competitive
frameworks~\cite{orlitsky2003always,mcallester2000leave,acharya2013tight,
orlitsky2015competitive}.
Ristad's natural law of succession~\cite{ristad1995natural} indicates that a
hierarchical construction can recover some of this adaptivity.
We ask whether a simple Bayesian construction can compete with these more
specialized estimators across both flat and highly concentrated targets.
Its only structural parameter is the depth, and even that can be removed
by averaging over it.

\emph{Second}, can the scaling of such an estimator be understood from
first principles?
Empirical scaling laws in machine learning describe robust power-law
improvements with data, model size, and compute~\cite{kaplan2020scaling},
but their exponents are generally difficult to derive.
Large-alphabet estimation offers a setting in which the relevant mechanism
may be isolated: under a heavy-tailed source, the number of distinct symbols
revealed by the data itself follows a scaling law.
A nontrivial but tractable estimator could therefore connect its regret
directly to the rate of symbol discovery, providing a solvable example in
which the origin of a scaling exponent is explicit.

\emph{Third}, what does a layered architecture buy?
In the information-theoretic view of machine learning developed
in~\cite{feder2025framework}, an architecture together with the randomness
of its initialization induces a prior \(w\) over predictors.
The corresponding Bayesian mixture pays, on a particular target, according
to how much prior mass lies near that target: up to lower-order terms, its
regret is governed by
\[
    -\log w\bigl(\Theta_0^\epsilon\bigr),
\]
the negative logarithm of the prior mass of a small neighborhood of the
best predictor~\cite{feder1992universal,MerhavFeder1998,
feder2025framework}.
Layered architectures can induce priors with a broad complexity range:
multiplying independent random factors spreads their sizes over many
orders of magnitude, so appreciable prior mass lands on predictors of
very different complexities.
Can this mechanism be exhibited in a minimal model in which the prior,
regret, and resulting scaling laws can all be computed?

We answer these questions using the prior induced by a layered simplex
architecture (LSA) on the alphabet \(\mathcal X_d=\{1,\ldots,d\}\); we
call it the LSA prior and its Bayesian mixture the LSA predictor.
At depth \(L\), we draw \(L\) independent uniform points $\{U_i\}_{i=1}^L$ from the
probability simplex, multiply them coordinatewise, and renormalize:
\[
  \theta_i
  =
  \frac{\prod_{\ell=1}^L U^{(\ell)}_i}
       {\sum_{j=1}^d \prod_{\ell=1}^L U^{(\ell)}_j},
  \qquad i=1,\ldots,d.
\]
Sometimes, such multiplication and normalization is called ``product of experts'' \cite{Hinton1999ProductsOE}. This leads to a non-uniform LSA prior on the $d$-dimensional simplex. With this prior, we can perform a Bayesian mixture over $P_\theta(x^N)$ a multinomial distribution, to get $Q_N^{(L)}(x^N)$, which will define the LSA predictor. The construction has only one structural parameter.
At \(L=1\), its Bayesian mixture is exactly Laplace's add-one rule; at depth
proportional to \(\ln d\), typical prior draws become sparse and
heavy-tailed.
Because different depths favor targets of different concentration, we also
study a predictor that averages the mixtures over all depths up to a maximum
\(L_{\max}\):
\begin{equation}
\label{eq:depth-average}
  Q^{\mathrm{avg}}_\Nseq(x^\Nseq)
  \;=\;
  \frac{1}{L_{\max}}\sum_{L=1}^{L_{\max}} Q^{(L)}_\Nseq(x^\Nseq).
\end{equation}
The result is a single coherent Bayesian predictor that requires no depth
selection and automatically shifts its posterior weight toward the depths
best suited to the observed data.
Thus the same model provides a competitive large-alphabet estimator, a
tractable setting for studying scaling laws, and a minimal example of a
layered prior with a broad complexity range.

\paragraph{Contributions.}
\begin{enumerate}
\item \emph{An exact and computable regret.}  We derive an exact expression
  for the regret as a sum over the possible \emph{profiles} of the sample,
  where the profile records how many symbols were observed once, how many
  twice, and so on (but not which symbols):
  \[
    R_\Nseq
    =
    -H(p)
    -
    \frac1\Nseq\sum_{\lambda\in\Lambda_\Nseq}
    A_\lambda(p)\log_2 q_\lambda .
  \]
  The formula separates the two ingredients of the problem:
  \(A_\lambda(p)\) depends only on the target, \(q_\lambda\) only on the
  architecture.  At depth \(L=1\) the induced predictor is exactly
  Laplace's add-one rule (Proposition~\ref{prop:laplace}), so the shallow
  end of the family is a classical baseline.  The expression is efficiently
  computable at scale: we evaluate it for alphabets up to \(d=10^6\),
  depths up to \(L=138\), and up to the \(9\cdot10^5\)-token corpus of
  Section~\ref{sec:bible}.
\item \emph{Deeper priors suit more concentrated targets, and averaging
  over depth removes the need to choose.}  Evaluating every integer depth
  up to \(c=L/\ln d\approx10\) at \(d=10^3,10^4,10^6\), we find that the
  best depth grows with the concentration of the target and with the
  alphabet size; on the most concentrated target at \(d=10^6\) the regret
at the best depth is about \(49\) times smaller than at \(L=1\), while
on the uniform target the ordering reverses
(Section~\ref{sec:spectrum}). The
  depth-averaged predictor \eqref{eq:depth-average} tracks the best single
  depth to within \((\log_2 L_{\max})/\Nseq\) bits per symbol on every
  target, so no depth selection is needed.
\item \emph{A scaling law with an identifiable mechanism.}  For Zipf
  targets with \(\alpha>1\) at logarithmic depth, the regret follows
  \(R_\Nseq\approx C(c,\alpha)\,(\log_2 d)\,\Nseq^{-(1-1/\alpha)}\)
  (Section~\ref{sec:scaling}).  The mechanism is symbol discovery: a
  sample of size \(\Nseq\) reveals about \(\Nseq^{1/\alpha}\) distinct
  symbols, and the prior pays the number of bits needed to say which
  symbols these are, at one bit of code per bit of description, plus a
  premium per discovered symbol.  We both measure this unit price and
  derive it: as a function of the depth coefficient \(c\), it has a flat
  minimum of exactly one at the freezing transition \(\cstar\) of the
  prior---the depth beyond which a typical draw concentrates its mass on a
  few symbols---which is also why the transition leaves no kink in the
  regret.  The data exponent \(1-1/\alpha\) is thus the rate at which new
  symbols are discovered.
\item \emph{The depth-averaged predictor competes with the best classical
  estimators.}  On a benchmark of eleven targets of very different shapes
  (support \(d=10^4\), extending \cite{orlitsky2015competitive}), one
  fixed prior with no tuned constants performs about as well as whichever
  classical method is best for each target: it coincides with add-one on
  the flat targets, matches Good--Turing on Zipf \(\alpha=1.5\), beats it
  on the more concentrated targets, and loses clearly only on the flattest
  ones, where Good--Turing exploits count-frequency statistics that no
  exchangeable mixture in this family uses
  (Section~\ref{sec:comparison}).  On a real text, the entire King James
  Bible over a fixed \(d=10^5\) vocabulary, it is the best method tested
  at every prefix length, with a posterior over depths that stays in the
  logarithmic regime (Section~\ref{sec:bible}).  Adding memory as a second
  architectural axis gains a further \(1.47\) bits per token, while the ordering reverses for flat priors: per-state KT codes worse than the memoryless layered
  model, so at this alphabet size the marginal prior matters more than the
  memory it feeds (Section~\ref{sec:memory}).
\item \emph{Certified computation.} All probabilities are evaluated from
explicit formulas rather than by running the predictor sequentially;
the only statistical error is the Monte Carlo average over count
profiles, whose standard errors are reported. The implementation is
validated against closed forms, independent quadrature, and exact
identities (Appendices~\ref{sec:app-mixture-terms}
and~\ref{sec:app-validation}).
\end{enumerate}
Throughout, regret and KL divergences are reported in bits; asymptotic
formulas use \(\ln d\), which changes only constants.
\section{The Information-Theoretic Setting, in Brief}
\label{sec:framework}
This work is an instance of a general program: the information-theoretic
approach to modern machine learning, in which an architecture with randomly
initialized parameters is analyzed as a \emph{prior over predictors} and
learning as universal prediction \cite{feder1992universal,MerhavFeder1998,feder2025framework}.
We state here the two facts from that setting that the rest of the paper
uses, and return to the setting itself, and to what our results say about
it, in Section~\ref{sec:broader}.
Consider batch or online prediction of a sequence \(x^\Nseq\) under log-loss.
A hypothesis class \(\{p_\theta\}_{\theta\in\Theta}\) with a prior \(w\) over
\(\Theta\) defines the Bayesian mixture
\(Q_\Nseq(x^\Nseq)=\int w(d\theta)\,p_\theta(x^\Nseq)\), the universal
predictor of the class.  For any reference \(\theta_0\) and any neighborhood
\(\Theta_0^{\epsilon}=\{\theta:\ \tfrac1\Nseq
\KL(p_{\theta_0}^\Nseq\|p_\theta^\Nseq)\le\epsilon\}\), the mixture's
per-symbol redundancy against \(p_{\theta_0}\) satisfies the non-uniform bound
\begin{equation}
\label{eq:prior-mass-bound}
  \frac1\Nseq\,\KL\!\left(p_{\theta_0}^\Nseq\,\middle\|\,Q_\Nseq\right)
  \;\le\;
\min_{\epsilon > 0}  \left[ \epsilon
  \;-\;
  \frac1\Nseq\log w\!\left(\Theta_0^{\epsilon}\right) \right ],
\end{equation}
so the price of universality for a \emph{particular} target is governed by
the log prior mass near that target, not by the cardinality or dimension of
the class \cite{feder1992universal,MerhavFeder1998,feder2025framework}.  The quantity
\(-\log w(\Theta_0^{\epsilon})\) is the \emph{complexity of the target under
the prior}; for smooth parametric families it recovers the familiar
\(\frac{k}{2}\log \Nseq\) with \(k\) the (effective) dimension, but in general
it is a property of the prior’s geometry.
An architecture converts the randomness of its initialization into a random
predictor, and the distribution of that random predictor is exactly the
prior \(w\); one claim in \cite{feder2025framework} that this paper makes
exact is that \emph{layered} architectures induce priors with a \emph{broad
complexity range}: the unit of prior mass is spread over the whole range of
complexities, with no level absorbing almost all of it, so that each target
is charged roughly its own complexity.  The LSA is
designed to be, plausibly, the simplest nontrivial instance in which all
the objects in \eqref{eq:prior-mass-bound} (mixture, prior mass, and
regret) can be computed essentially exactly.
\section{The LSA and Its Exact Regret}
\label{sec:model}
Let
\[
  \mathcal X_d=\{1,\ldots,d\},
  \qquad
  \Delta_d=\Bigl\{p\in[0,1]^d:\ \textstyle\sum_{i=1}^d p_i=1\Bigr\}.
\]
Zipf targets serve as a one-parameter family of test distributions.  For
\(\alpha\ge 0\),
\[
  p_i=p_{d,\alpha}(i)
  =
  \frac{i^{-\alpha}}{H_{d,\alpha}},
  \qquad
  H_{d,\alpha}=\sum_{j=1}^d j^{-\alpha},
\]
with \(\alpha=0\) the uniform distribution and larger \(\alpha\) more
concentrated targets.
The prior is a probability distribution over \(\Delta_d\), constructed as
follows. Fix an integer \(L\ge 1\). Draw
\[
  U^{(1)},\ldots,U^{(L)}
  \stackrel{\mathrm{iid}}{\sim}
  \operatorname{Dirichlet}(1,\ldots,1),
\]
multiply the \(L\) simplex points coordinatewise, and renormalize:
\[
  \theta_i
  =
  \frac{\prod_{\ell=1}^L U^{(\ell)}_i}
       {\sum_{j=1}^d \prod_{\ell=1}^L U^{(\ell)}_j},
  \qquad i=1,\ldots,d.
\]
The law of \(\theta\) is denoted \(\model\).  Equivalently, if
\(E_{\ell i}\stackrel{\mathrm{iid}}{\sim}\operatorname{Exp}(1)\), then
\begin{equation}
\label{eq:exp-rep}
  Y_i=\prod_{\ell=1}^L E_{\ell i},
  \qquad
  \theta_i=\frac{Y_i}{\sum_{j=1}^d Y_j},
\end{equation}
because a uniform Dirichlet vector is a normalized vector of iid
exponentials and the simplex normalizations cancel in the final
renormalization.  This LSA construction is shown in Figure~\ref{fig:construction}.
\begin{figure}[htp]
  \centering
\begin{tikzpicture}[x=4.0mm,y=1mm]
  % dotted guide through the winning coordinate (drawn in two segments)
  \draw[black!35,densely dotted,line width=0.5pt] (15.31,8.5) -- (15.31,53);
  \draw[black!35,densely dotted,line width=0.5pt] (15.31,-18) -- (15.31,-3.5);
  % --- layer rows ---
  \foreach \i/\h in {1/3.43,2/1.21,3/1.53,4/0.21,5/1.02,6/1.72,7/0.35,8/0.65,9/0.53,10/0.19,11/0.99,12/3.15,13/0.07,14/1.94,15/2.54,16/0.40,17/1.39,18/2.11,19/1.47,20/0.78,21/1.44,22/0.14,23/0.30,24/0.43} \fill[black!45] (\i,44) rectangle ++(0.62,\h);
  \draw[black!60,line width=0.4pt] (0.9,44) -- (25.0,44);
  \node[left,font=\small] at (0.7,46.4) {$U^{(1)}$};
  \foreach \i/\h in {1/0.86,2/1.58,3/0.63,4/3.58,5/0.63,6/2.27,7/0.98,8/0.65,9/1.81,10/0.77,11/1.44,12/1.33,13/0.23,14/1.22,15/2.95,16/0.94,17/0.03,18/1.63,19/0.21,20/0.02,21/0.07,22/2.73,23/1.34,24/0.08} \fill[black!45] (\i,30) rectangle ++(0.62,\h);
  \draw[black!60,line width=0.4pt] (0.9,30) -- (25.0,30);
  \node[left,font=\small] at (0.7,32.4) {$U^{(2)}$};
  \node[font=\small] at (13,24.5) {$\vdots$};
  \foreach \i/\h in {1/0.99,2/0.02,3/0.02,4/0.04,5/2.73,6/1.85,7/0.30,8/1.82,9/0.12,10/1.20,11/2.09,12/2.84,13/0.30,14/1.08,15/1.39,16/2.40,17/1.46,18/0.09,19/0.45,20/0.41,21/3.37,22/1.14,23/0.54,24/1.35} \fill[black!45] (\i,10) rectangle ++(0.62,\h);
  \draw[black!60,line width=0.4pt] (0.9,10) -- (25.0,10);
  \node[left,font=\small] at (0.7,12.4) {$U^{(L)}$};
  % --- braces ---
  \draw[decorate,decoration={brace,amplitude=3pt},black!70]
    (1.0,52.5) -- (24.62,52.5) node[midway,above=3pt,font=\small] {$d$ coordinates, one per symbol};
  \draw[decorate,decoration={brace,amplitude=3pt},black!70]
    (25.6,50) -- (25.6,10) node[midway,right=4pt,align=left,font=\small]
    {$L$ independent\\ uniform draws\\ from $\Delta_d$};
  % --- arrow + operation ---
  \draw[-{Stealth[length=2.4mm]},black!80] (12.81,7) -- (12.81,-2.5);
  \node[right=8pt,align=left,font=\small] at (13.0,2.2)
    {multiply coordinatewise, renormalize};
  % --- theta row ---
  \foreach \i/\h in {1/1.31,2/0.01,3/0.02,4/0.01,5/0.21,6/2.18,7/0.03,8/0.13,9/0.02,10/0.06,11/0.35,12/7.97,13/0.00,14/3.68,15/10.72,16/0.75,17/0.01,18/0.05,19/0.03,20/0.01,21/0.02,22/0.26,23/0.11,24/0.06} \fill[black!80] (\i,-16) rectangle ++(0.62,\h);
  \draw[black!60,line width=0.4pt] (0.9,-16) -- (25.0,-16);
  \node[left,font=\small] at (0.7,-13.6) {$\theta$};
  \node[right,black!60,font=\small,align=left] at (0.9,-6.2)
    {sparse, heavy-tailed:\\ a few coordinates \\ carry the mass};
\end{tikzpicture}
  \caption{One draw of the layered simplex architecture (LSA) at $d=24$,
  $L=4$ (real draws, common vertical scale). Each layer is an independent
  uniform draw from the simplex, every coordinate fluctuating around the
  common scale $1/d$. Multiplying the layers coordinatewise and renormalizing
  yields a draw $\theta$ from the LSA prior $\mathcal{M}_{d,L}$ that is
  sparse and heavy-tailed. The dotted line traces the winning coordinate:
  it was above average in every layer.}
  \label{fig:construction}
\end{figure}
The exponential representation \eqref{eq:exp-rep} is the
``architectural'' form of the prior: \(d\) independent channels, each a
product of \(L\) iid nonnegative factors, globally normalized.
Two depth regimes play different roles. At fixed \(L\) the prior is
dense: all coordinates fluctuate around a common scale \(1/d\) (at
\(L=1\) it is exactly the uniform Dirichlet prior on \(\Delta_d\)).
In
the \emph{logarithmic-depth} regime
\[
  L=\operatorname{round}(c\ln d),
  \qquad c>0,
\]
the product construction concentrates as \(d\) grows: since
\(\log Y_i\) is a sum of \(L\) iid terms with mean $-\gamma$, where $\gamma$ is the Euler--Mascheroni
constant, and variance
\(\pi^2/6\), the coordinate scales spread over
\(e^{-\gamma L\pm O(\sqrt L)}\), and after normalization typical draws are
sparse and heavy-tailed: most coordinates are tiny and a small random set
carries the visible mass.  Depth therefore does not add parameters to fit; it
reshapes where the prior puts its mass on the simplex.
\subsection{Exact regret via count profiles}
\label{sec:exact-regret}
All logarithms in the regret are base \(2\).  The Bayesian mixture predictor
associated with \(\model\) assigns
\[
  Q_\Nseq(x^\Nseq)
  =
  \E_{\theta\sim\model}
  \prod_{t=1}^{\Nseq}\theta_{x_t}
\]
to a sequence \(x^\Nseq\in\mathcal X_d^\Nseq\).  The per-symbol online regret
against a target \(p\in\Delta_d\) is
\[
  R_\Nseq(Q,p)
  =
  \frac{1}{\Nseq}\KL(p^\Nseq\,\|\,Q_\Nseq)
  =
  \frac{1}{\Nseq}
  \E_{X^\Nseq\sim p^\Nseq}
  \log_2
  \frac{p^\Nseq(X^\Nseq)}{Q_\Nseq(X^\Nseq)} .
\]
For a sequence \(x^\Nseq\), let \(m_i\) count the occurrences of symbol
\(i\), and let
\[
  q_m
  =
  \E_{\theta\sim\model}
  \prod_{i=1}^d \theta_i^{m_i} .
\]
Since \(Q_\Nseq(x^\Nseq)=q_m\), and the count vector
\(M=[M_1,\ldots,M_d]\) of an iid sample is
\(\operatorname{Multinomial}(\Nseq,p)\)-distributed,
\begin{equation}
\label{eq:regret-counts}
  R_\Nseq(Q,p)
  =
  -H(p)
  -
  \frac{1}{\Nseq}
  \E\log_2 q_M,
  \qquad
  H(p)=-\sum_{i=1}^d p_i\log_2 p_i .
\end{equation}
This identity holds for every \(d\), \(L\), \(\Nseq\), and target \(p\).
\paragraph{How the experiments evaluate this formula.}
\label{par:how-computed}
The numerical results in this paper are obtained from
\eqref{eq:regret-counts} (or from the exact analogous expression for the
predictive probabilities); the same numbers could equivalently be obtained by
running the predictor symbol by symbol on simulated text.  Two computations are involved.  First, for a given count
vector, \(\log q\) and the next-symbol probabilities are computed
analytically by the methods of Appendix~\ref{sec:app-mixture-terms}; this
step has no statistical error, only a numerical error that is controlled
and certified (Appendix~\ref{sec:app-validation}).  Second, the expectation
over count vectors is estimated by Monte Carlo: we draw
\(M\sim\operatorname{Multinomial}(\Nseq,p)\) repeatedly with fixed random
seeds, evaluate the exact formula on each draw, and average.  The error
bars in every figure show the standard error of this averaging, which is
the only statistical uncertainty in the paper.  Later sections refer back
to this procedure simply as ``evaluating the exact formula.''
The model \(\model\) does not distinguish between alphabet labels, so
\(q_m\) depends only on the multiset of nonzero counts, which counts
occur and how many times, but not which symbols carry them.  This multiset
is called the \emph{profile} of the sample, written \(\lambda\), and we
write \(\Lambda_\Nseq\) for the set of all possible profiles of \(\Nseq\)
observations. Writing \(q_\lambda\) for the common value and
\[
  A_\lambda(p)
  =
  \mathbb P_{M\sim\operatorname{Multinomial}(\Nseq,p)}
  \bigl(\operatorname{profile}(M)=\lambda\bigr),
\]
the exact regret is
\begin{equation}
\label{eq:regret-profiles}
  R_\Nseq(Q,p)
  =
  -H(p)
  -
  \frac{1}{\Nseq}
  \sum_{\lambda\in\Lambda_\Nseq}
    A_\lambda(p)\log_2 q_\lambda .
\end{equation}
All dependence on the target is contained in \(H(p)\) and \(A_\lambda(p)\);
all dependence on the architecture is contained in \(q_\lambda\).  Note also
that \(H(p)\) and \(A_\lambda(p)\) are invariant under relabeling the
symbols: the regret depends on the target only through its \emph{shape}
(its sorted weight vector), a fact whose meaning for the notion of
complexity is taken up in Appendix~\ref{sec:app-surprisal}.
Appendix~\ref{sec:app-profile-weights} gives exact and asymptotic methods for
\(A_\lambda\); Appendix~\ref{sec:app-mixture-terms} gives the layer recursion
and integral representations for \(q_\lambda\) and for the induced predictive
probabilities, together with the validated numerical scheme used in all
experiments.  
 Throughout, \(\Nseq\) is the sample size under
  analysis and \(n\) a running number of observations; counts are always
  \(m_i\).
  
\paragraph{Batch regret.}
Equations~\eqref{eq:regret-counts} and~\eqref{eq:regret-profiles} measure the cumulative cost of coding $x^N$, or the accumulated log-loss in online prediction of the entire $x^N$. The
competitive-estimation literature, and Section~\ref{sec:comparison} below, instead
score an estimator by the divergence between the target and the single distribution
it outputs after $N$ observations.
That distribution is
\[
\hat q_m(i)
= Q(X_{N+1}=i\mid X^N)
= \frac{q_{m+e_i}}{q_m},
\]
where \(e_i\) is the \(i\)-th standard basis vector,
and \(m=m(X^N)\) is the random count vector induced by the sample
\(X^N\), i.e. the $j$-th entry of $m(X^N)$ is the number of appearances of $j$ in $X^N$. Define
\begin{equation}
\batchreg_N(Q,p)
=
\mathbb E_{X^N\sim p^N}
\left[
\KL\!\left(p\,\|\,\widehat q_{m(X^N)}\right)
\right].
\label{eq:batchdef}
\end{equation}
Because $\model$ ignores alphabet labels, $\hat q_m(i)$ depends on $i$ only through
the count $m_i$, and the performance depends on the profile $\lambda$.
Writing $\lambda \oplus r$ for the profile obtained from $\lambda$ by
moving one symbol from count $r$ to $r+1$, the common value on the count-$r$ class is
$\hat q_\lambda(r) = q_{\lambda \oplus r}/q_\lambda$, and $\sum_r c_r \hat
q_\lambda(r) = 1$ with $c_r$ the number of symbols seen $r$ times. Every mixture
in the family is therefore a \emph{natural} estimator in the sense
of~\cite{orlitsky2015competitive}, so the natural oracle of
Section~\ref{sec:comparison} lower-bounds every LSA row of
Table~\ref{tab:orlitsky}. Grouping symbols by count and
writing $S_r(m) = \sum_{i : m_i = r} p_i$ for their true total mass gives the analogue
of \eqref{eq:regret-profiles},
\begin{equation}
\batchreg_N(Q,p) \;=\; -H(p) \;-\; \mathbb{E}\!\left[\, \sum_{r \ge 0} S_r(M)\,
\log_2 \frac{q_{\lambda \oplus r}}{q_\lambda} \right],
\qquad \lambda = \mathrm{profile}(M).
\label{eq:batchexact}
\end{equation}
This identity is exact for every $d$, $L$, $N$, and $p$; only the
expectation over count vectors is estimated by Monte Carlo, by the
procedure of Section~\ref{sec:exact-regret}.
\subsection{The shallow end of the family is Laplace's rule}
\begin{proposition}
\label{prop:laplace}
For \(L=1\), the mixture is the uniform-Dirichlet mixture,
\[
  q_m
  =
  \frac{\prod_i m_i!\;\Gamma(d)}{\Gamma(d+\Nseq)},
\]
and the induced sequential predictor is the add-one (Laplace) rule: after
counts \(m\) with \(\sum_i m_i=n\),
\[
  Q(x_{n+1}=i\mid x^n)
  =
  \frac{m_i+1}{n+d}.
\]
\end{proposition}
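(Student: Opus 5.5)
At \(L=1\) the construction reduces to a single draw \(\theta=U^{(1)}\sim\operatorname{Dirichlet}(1,\ldots,1)\): the renormalization is trivial because \(U^{(1)}\) already lies in \(\Delta_d\). Equivalently, in \eqref{eq:exp-rep} we have \(\theta_i=E_{1i}/\sum_j E_{1j}\). So the plan is to compute the Dirichlet moment \(q_m=\E\prod_i\theta_i^{m_i}\) directly and then take the ratio \(q_{m+e_i}/q_m\).

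For the moment, I will write the uniform Dirichlet density on the simplex. With respect to Lebesgue measure on the first \(d-1\) coordinates it is the constant \(\Gamma(d)=(d-1)!\). The integral is then the standard Dirichlet integral
\[
\int_{\Delta_d}\prod_i\theta_i^{m_i}\,d\theta=\frac{\prod_i\Gamma(m_i+1)}{\Gamma\bigl(\sum_i(m_i+1)\bigr)}=\frac{\prod_i m_i!}{\Gamma(\Nseq+d)}.
\]
Multiplying by \(\Gamma(d)\) gives the stated \(q_m\).

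If I want a self-contained derivation of the Dirichlet integral, I will use the exponential representation. Let \(S=\sum_j E_j\sim\operatorname{Gamma}(d,1)\); then \(S\) is independent of \(\theta=E/S\). Compute \(\E\prod_i E_i^{m_i}=\prod_i m_i!\) in two ways. Directly, it is a product of independent exponential moments. Via the factorization \(E_i=S\theta_i\), it equals \(\E[S^{\Nseq}]\,\E\prod_i\theta_i^{m_i}=\frac{\Gamma(d+\Nseq)}{\Gamma(d)}\,q_m\). Solving for \(q_m\) gives the formula. The only nontrivial ingredient here, and the step needing care, is the independence of \(S\) and \(E/S\). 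This is the standard Gamma–Dirichlet (Lukacs) fact, which I would cite.

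For the predictor, since \(Q_\Nseq(x^\Nseq)=q_m\), the conditional probability after counts \(m\) with \(\sum_i m_i=n\) is
\[
Q(x_{n+1}=i\mid x^n)=\frac{q_{m+e_i}}{q_m}=\frac{(m_i+1)\,\Gamma(d+n)}{\Gamma(d+n+1)}=\frac{m_i+1}{n+d}.
\]
This follows because only the \(i\)-th factorial changes and the Gamma functions telescope. This is exactly the add-one rule, and its values sum to \((n+d)/(n+d)=1\) as a consistency check.
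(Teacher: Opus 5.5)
Your proposal is correct and takes the same route as the paper: you identify the \(L=1\) prior as \(\operatorname{Dirichlet}(1,\ldots,1)\), read off \(q_m\) from the Dirichlet moment formula, and obtain the add-one rule from the ratio \(q_{m+e_i}/q_m=(m_i+1)/(n+d)\). Your optional derivation of the moment formula, via independence of \(S=\sum_j E_j\) and \(E/S\) together with \(\E[S^{\Nseq}]=\Gamma(d+\Nseq)/\Gamma(d)\), is valid; it is specific to \(L=1\), and the paper's proof simply cites the moment formula.
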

\begin{proof}
The moment formula for \(\operatorname{Dirichlet}(1,\dots,1)\) gives the
expression for \(q_m\); the predictive ratio is
\(q_{m+e_i}/q_m=(m_i+1)/(d+n)\).
\end{proof}
Thus the family \(\{\model\}_{L\ge1}\) starts, at its shallow end, exactly at
the classical baseline whose failure on large skewed alphabets motivates
Good--Turing-type methods.  Everything that depth adds is therefore measured
against Laplace by construction.
\subsection{The deep end of the family: depth tilts the complexity
spectrum}
\label{sec:spectrum}
Proposition~\ref{prop:laplace} identified the shallow end of the family
with a classical rule; we now examine what happens at the deep end.  The
framework of Section~\ref{sec:framework} predicts a specific qualitative
effect: increasing \(L\) should move prior mass from the center of the
simplex toward sparse, low-entropy vectors, and the regret
\eqref{eq:regret-profiles} should respond by \emph{decreasing} on skewed
targets and \emph{increasing} on flat ones, a tilt of the complexity
spectrum, with the family as a whole covering the entire range. A more elaborate analysis of the resulting prior behavior is carried out in \cite{jeong2026phase}.
A back-of-the-envelope version of the prior-mass computation makes the
prediction quantitative.  Under \eqref{eq:exp-rep},
\(\log Y_i\) is a sum of \(L\) iid \(\log\)-exponential variables
(\(\E\log E=-\gamma\), \(\operatorname{Var}\log E=\pi^2/6\)).  For the prior
to place mass near a target whose sorted weights decay like
\(p_{(i)}\sim i^{-\alpha}\), the top coordinates must receive log-weights of
order \(\alpha\ln i\) above the bulk; the probability that a given channel's
sum of \(L\) terms exceeds the bulk by \(\Delta\) decays exponentially in
\(\Delta^2/L\) (Gaussian regime) or \(\Delta\) (large-deviation regime).
So a channel's log-weight fluctuates by $O(\sqrt{L})$ for free and can be
pushed up by $O(L)$ at the large-deviation cost above, while a Zipf target
asks for lifts of order $\ln d$. The reachable range and the required range
match exactly when $L \asymp \ln d$, which is why logarithmic depth is the
natural scale.  At fixed small \(L\)
the spread is \(O(1)\) and skewed targets are exponentially expensive for the
prior; at \(L\gg\ln d\) the prior over-commits to sparsity and flat targets
become expensive.

The computation above has a useful rephrasing: it evaluates the surprisal of
the target's \emph{shape} under the density that the prior induces on
shapes, which is the architecture-dependent part of the complexity in
\eqref{eq:prior-mass-bound}.  Appendix~\ref{sec:app-surprisal} develops this
refinement; it is not needed for what follows.

Figure~\ref{fig:spectrum} tests this prediction directly, evaluating the
regret formula for each target and depth by the procedure of
Section~\ref{sec:exact-regret}. Three alphabet sizes are shown:
\(d\in\{10^3,10^4,10^6\}\). We evaluate every integer depth from \(L=1\)
through \(L_{\max}\in\{69,92,138\}\), respectively, corresponding to
\(c=L/\ln d\lesssim10\). The targets are Zipf laws with
\(\alpha\in\{0,0.3,\ldots,3\}\); sample sizes are \(N=316\) at
\(d=10^3\) and \(N=1000\) at \(d=10^4,10^6\). Each point averages
\(40\) sampled profiles, with profile-sampling standard errors reported.
Only a representative subset of depths is drawn in the figure for
readability. The dashed red curve is the depth-averaged LSA predictor over all
depths \(1\le L\le L_{\max}\).
\begin{figure}[htp]
  \centering
  \includegraphics[width=\textwidth]{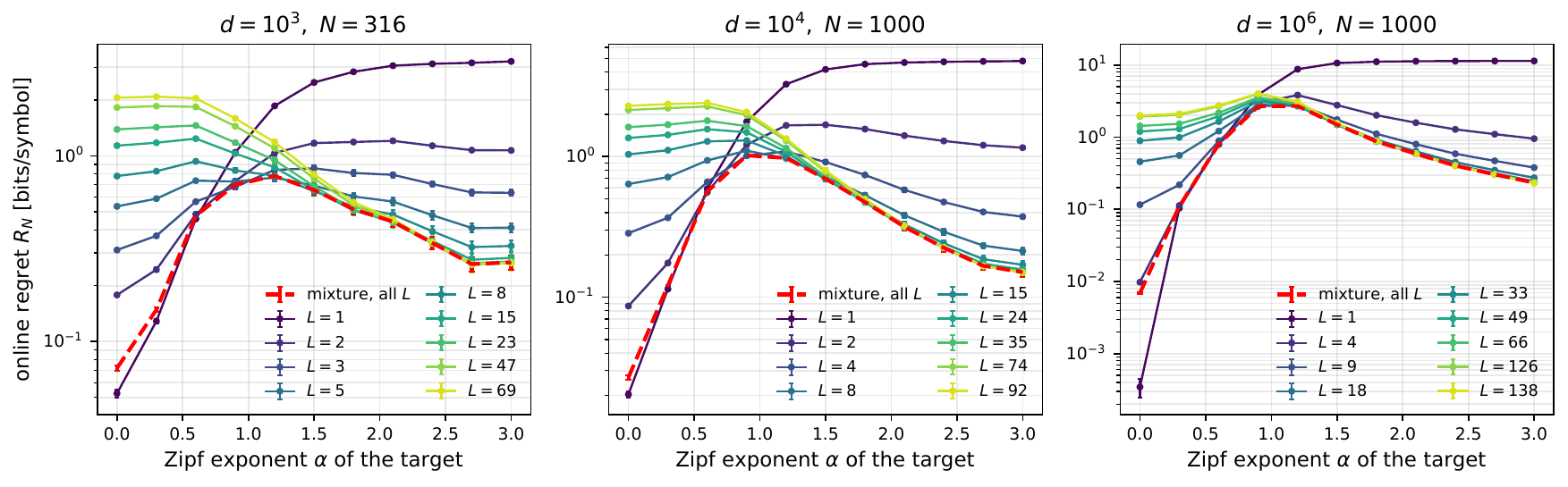}
  \caption{Depth tilts the complexity spectrum at every alphabet size.
Online regret on a logarithmic scale versus the Zipf exponent \(\alpha\)
of the target, for \(d=10^3,10^4,10^6\).  Every integer depth up to \(c=L/\ln d\approx10\) was evaluated;
representative depths are displayed for readability.  Shallow priors perform best on near-uniform
targets, while increasingly deep priors perform best as the target becomes
more concentrated.  The dashed red curve is the uniform sequence-mixture
over all depths \(1\le L\le L_{\max}\).
Its gap from the single-depth envelope is bounded by
\((\log_2 L_{\max})/N\) (Remark~\ref{rem:double-mix}); the measured gaps are
reported in the text.
Error bars are profile-sampling standard errors over \(40\) sampled
profiles.}
  \label{fig:spectrum}
\end{figure}
\begin{table}[htp]
  \centering
  \small
  \begin{tabular}{lccccccccccc}
    \toprule
    \(\alpha\) & 0 & 0.3 & 0.6 & 0.9 & 1.2 & 1.5 &
    1.8 & 2.1 & 2.4 & 2.7 & 3.0 \\
    \midrule
    best \(L\) (\(d=10^3\))
      & 1 & 1 & 1 & 3 & 6 & 10 & 15 & 23 & 32 & 43 & 47 \\
    best \(L\) (\(d=10^4\))
      & 1 & 1 & 2 & 4 & 8 & 15 & 24 & 35 & 48 & 63 & 74 \\
    best \(L\) (\(d=10^6\))
      & 1 & 1 & 4 & 9 & 18 & 33 & 49 & 66 & 85 & 110 & 126 \\
    \midrule
    envelope (\(d=10^6\))
      & 0.00 & 0.10 & 0.79 & 2.67 & 2.70 & 1.49 &
        0.87 & 0.59 & 0.40 & 0.30 & 0.23 \\
    \(L{=}1\) (\(d=10^6\))
      & 0.00 & 0.10 & 0.80 & 3.88 & 8.74 & 10.63 &
        11.10 & 11.27 & 11.31 & 11.35 & 11.36 \\
    \bottomrule
  \end{tabular}
  \caption{Best tested depth per target for the three alphabet sizes of
  Figure~\ref{fig:spectrum} (bits/symbol in the last two rows).  Every integer depth up to \(c\lesssim10\) was tested; each entry is
  estimated from \(40\) sampled profiles.  The minimizing
  depth increases broadly with target skew and with \(d\), although the
  regret curves for the more concentrated targets are shallow and therefore
  do not identify the minimizing integer depth sharply.  None of the
  nontrivial minima occurs at the upper search boundary.  }
  \label{tab:spectrum}
\end{table}
Three features of Figure~\ref{fig:spectrum} and Table~\ref{tab:spectrum}
deserve emphasis.
First, the crossing structure is the predicted tilt at every alphabet size.
\(L=1\) is best for the near-uniform targets and performs increasingly
poorly as the target becomes concentrated.  The best tested depth increases
broadly with \(\alpha\) and with \(d\), see Table~\ref{tab:spectrum}.
At \(d=10^6,\alpha=3\), the family envelope improves on
\(L=1\) by a factor of approximately \(49\).
Second, define the \emph{envelope} of the family as the regret of the best
single depth for each target, the pointwise minimum over the curves in
each panel.  The envelope is nearly flat compared with any single member,
and, crucially, it is essentially \emph{achieved} by one predictor: the
depth-average \eqref{eq:depth-average} (dashed red in
Figure~\ref{fig:spectrum}) lies within \((\log_2 L_{\max})/\Nseq\) bits of
the envelope at every \((d,\alpha)\), including \(d=10^6\).
The measured worst-case gaps are $0.01933$, $0.006524$, and $0.006572$ bits
in the three panels, within the corresponding bounds $0.01933$, $0.006524$,
and $0.007109$.  In the first two panels the gap attains the bound exactly:
it occurs on the flattest targets, where the posterior collapses onto
$L=1$, so the average pays the full $\log_2(L_{\max})$ surcharge.  Nothing
is lost beyond the stated price of not knowing the depth in advance.
The family \(\{\model\}_L\), indexed by one integer, has a broad complexity
range in the operational sense of Section~\ref{sec:framework}, and the
range is packaged into a single coherent prior at negligible cost.
\begin{remark}[Why depth-averaging is free]
\label{rem:double-mix}
Computationally, the whole depth family costs no more than its deepest
member: the layer recursion of Appendix~\ref{sec:app-mixture-terms}
computes depth \(L\) from depth \(L-1\), so a single run up to
\(L_{\max}\) produces every depth along the way.  Statistically, the
depth-averaged predictor \eqref{eq:depth-average} pays for its ignorance
of the right depth at most \(\log_2 L_{\max}\) bits \emph{in total} over
the whole sequence, the cost of the uniform \(1/L_{\max}\) factor, i.e.\
at most \((\log_2 L_{\max})/\Nseq\) bits per symbol.  Since the cumulative
regret is the sum over time of the instantaneous prediction errors, the
average tracks the best single depth up to this vanishing overhead, which
is exactly what the dashed curves in Figure~\ref{fig:spectrum} show. 
\end{remark}
Third, the tilt is a statement about the prior, not about fitting: nothing
was trained, and the number of ``parameters'' \(dL\) plays no role in the
crossing: \(L=15\) is \emph{worse} than \(L=1\) on the uniform target despite
being the larger model.  What changes with depth is where the prior
mass sits, i.e.\ which targets are cheap in the sense of
\eqref{eq:prior-mass-bound}.
\section{Scaling Laws from Symbol Discovery}
\label{sec:scaling}
For a Zipf target, the regret is closely tied to the number of
symbols that the sample reveals. Throughout this section, \(R_N\) denotes the normalized online regret
in \eqref{eq:regret-counts}, after \(N\) predictions; thus \(N R_N\)
is the corresponding cumulative online regret. It is computed via
\eqref{eq:regret-profiles}
using the methods described in
Section~\ref{sec:exact-regret}. Let $K_N$ be the number of different
symbols seen in $N$ independent draws from
\begin{equation*}
p_i = \frac{i^{-\alpha}}{H_{d,\alpha}}, \qquad i = 1, \dots, d.
\end{equation*}
Symbol $i$ is seen at least once with probability $1-(1-p_i)^N$.
Hence the expected number of discovered symbols is
\begin{equation}
k_N(d,\alpha) \;:=\; \E[ K_N] \;=\;
\sum_{i=1}^{d}\bigl[1-(1-p_i)^N\bigr].
\label{eq:kN}
\end{equation}
We can compute this sum directly for every $d$, $N$, and $\alpha$
used below.
At logarithmic depth, $L = \mathrm{round}(c \ln d)$, we expect much
of the redundancy to come from identifying the symbols that have
appeared. Saying which $K_N$ of the $d$ symbols were seen takes
$\log_2\binom{d}{K_N}$ bits. This suggests the finite-size law
\begin{equation}
N R_N(d, L, \alpha) \;\approx\;
\E\!\left[\log_2 \binom{d}{K_N}\right]
\;+\; B(c,\alpha)\, k_N .
\label{eq:law}
\end{equation}
The first term is the cost of naming the discovered subset, charged
at one bit of code per bit of description.
The second term charges each discovered symbol a further
$B(c,\alpha)$ bits on top of its share of the description length.
Equation \eqref{eq:law} is
the main finite-size prediction tested in this section. Regret is in
bits throughout, and all logarithms in fits and figures are base 2.
The growth of \eqref{eq:kN} in $N$ can be read off from a simplified
model, reached in three steps: replace $(1-p_i)^N$ by $e^{-N p_i}$,
replace the normalization $H_{d,\alpha}$ of the finite alphabet by the zeta function $\zeta(\alpha)$,
and let the sum run over all $i \ge 1$ rather than stopping at $d$.
\begin{figure}[htp]
  \centering
  \includegraphics[width=0.92\textwidth]
  {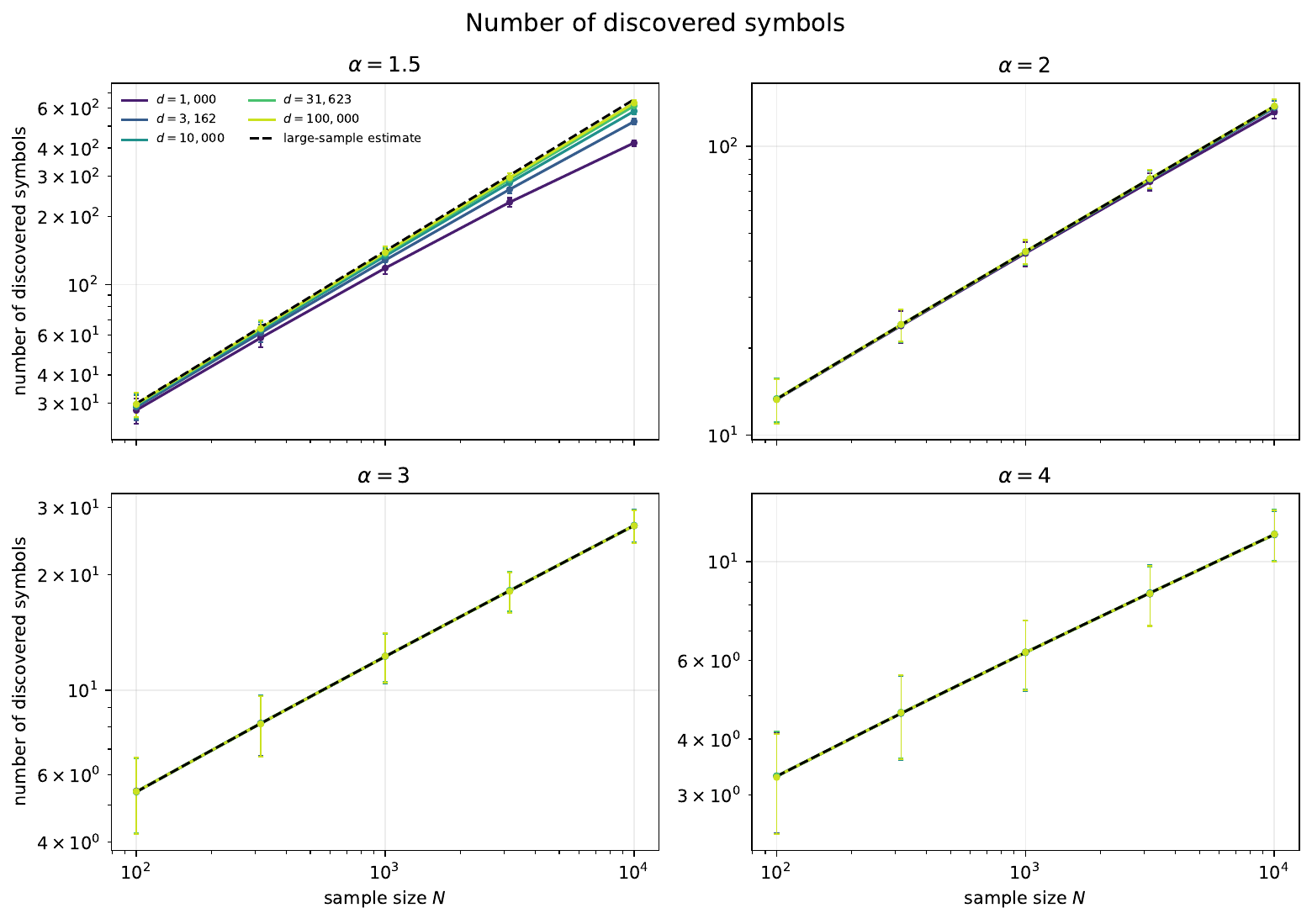}
  \caption{\textbf{Symbol discovery.}
  Solid curves give the exact expected number of discovered symbols,
  dashed curves give the large-sample approximation in
  Equation~\eqref{eq:kN-asym}, and points with error bars give the mean and
  standard deviation over \(5{,}000\) sampled profiles. The effect of the
  finite alphabet is visible mainly at \(\alpha=1.5\) and small \(d\).}
  \label{fig:kN-val}
\end{figure}
Comparing the simplified sum with the integral
$\int_0^\infty \bigl(1-e^{-N x^{-\alpha}/\zeta(\alpha)}\bigr)\,dx$
gives
\begin{equation}
k_N \;=\; \Gamma\!\Bigl(1-\tfrac1\alpha\Bigr)
\Bigl(\tfrac{N}{\zeta(\alpha)}\Bigr)^{1/\alpha}
\;-\; \tfrac12 \;+\; o(1),
\label{eq:kN-asym}
\end{equation}
where the $o(1)$ refers to the simplified model as $N \to \infty$.
The integral equals the first term. The $-\tfrac12$ arises because the sum starts at $i=1$ while the
integral starts at $0$: matching each term of the sum to the unit
interval around it leaves $(0,\tfrac12)$ uncovered, where the
integrand is one.
The cost of the three steps is dominated by the last one: the
infinite sum counts symbols beyond $d$, about
$N d^{1-\alpha}/((\alpha-1)\zeta(\alpha))$ of them, so
\eqref{eq:kN-asym} overshoots \eqref{eq:kN} once the sample starts
to exhaust the alphabet. On the grid of Section~\ref{sec:tests} the
overshoot stays below $5\%$ for $\alpha \ge 2$. At $\alpha = 1.5$ it
grows with the revealed fraction of the alphabet: at $N = 10^4$ it
is $4\%$ of the count at $d = 10^5$ but $55\%$ at $d = 10^3$, where
the sample has revealed $42\%$ of the alphabet. The first two steps
cost at most a few symbols and act in the opposite direction.
Figure~\ref{fig:kN-val} compares \eqref{eq:kN}, \eqref{eq:kN-asym},
and the counts observed in the sampled profiles. All tests below use
the exact count \eqref{eq:kN}; equation~\eqref{eq:kN-asym} serves
only to exhibit the growth $N^{1/\alpha}$.
Substituting \eqref{eq:kN-asym} into \eqref{eq:law} and keeping only
the leading orders gives the simple law
\begin{equation}
R_N(d, L, \alpha) \;\approx\; C(c,\alpha)\,(\log_2 d)\,
N^{-(1-1/\alpha)}.
\label{eq:leading}
\end{equation}
Equation \eqref{eq:leading} is easy to remember, but it stacks
several approximations. At finite $d$ and $N$, we expect
\eqref{eq:law} to be more informative.
\subsection{Three numerical tests}
\label{sec:tests}
We test alphabet size, data, and depth separately. The data and alphabet scaling tests use
\begin{equation*}
N \in \{10^{2},\, 10^{2.5},\, 10^{3},\, 10^{3.5},\, 10^{4}\},
\qquad
d \in \{10^{3},\, 10^{3.5},\, 10^{4},\, 10^{4.5},\, 10^{5}\},
\end{equation*}
with $\alpha \in \{1.5, 2, 3, 4\}$ and $c = \cstar =
(1-\gamma)^{-1}$; the depth test uses $d = 10^{5}$ and $N = 10^{3}$.
At $\alpha = 1.5$ discovery is fastest, with $k_N$ up to about 630;
at $\alpha = 4$ only $k_N \approx 3$ to $12$ symbols are discovered,
so this is where small-count effects show.
\paragraph{Alphabet scaling.}
We first fix $N$, $c$, and $\alpha$, and vary only $d$. Dividing
\eqref{eq:law} by $k_N$ gives
\begin{equation}
\frac{N R_N}{k_N} \;\approx\;
\frac{\E\log_2\binom{d}{K_N}}{k_N} \;+\; B(c,\alpha).
\label{eq:alphabet}
\end{equation}
Write $x$ for the first term on the right of
\eqref{eq:alphabet}, the description length per discovered symbol.
For every $d$, $N$, and $\alpha$, the value of $x$ is a number we
compute, while the left side, $y = N R_N / k_N$, is a number we
measure. Equation \eqref{eq:alphabet} says $y = x + B$. So if we
plot the measured $y$ against the computed $x$, with $d$ moving the
points along the horizontal axis, the points should fall on a
straight line of slope one, lying $B$ above the diagonal. The slope is
the substance of the test: it measures how many bits of code the prior
spends per bit of description. If the prior spent one and a half bits
per bit, the slope would come out $1.5$. The second prediction of
\eqref{eq:alphabet} is that the offset $B$ of the line is the same
for every $N$.
Table~\ref{tab:alphabet} gives the measured slope and offset for three
values of \(N\). For \(\alpha\geq2\), the slopes range from \(0.95\) to
\(1.09\), close to the predicted value one. The agreement is especially
stable at \(\alpha=3\) and \(4\).
The result differs at \(\alpha=1.5\). The measured slope decreases from
\(0.87\) at \(N=10^2\) to \(0.58\) at \(N=10^4\). In this regime the sample
reveals a substantial fraction of the smaller alphabets, so the description
cost is no longer proportional to the expression used in
Equation~\eqref{eq:alphabet}.
The offset slightly changes with \(N\) for every \(\alpha\).
\begin{remark} \label{rem:alpha-part} Part of the growth of $B$ with $\alpha$ has a
simple origin. The target weights fall like $i^{-\alpha}$, so
symbol $i$ carries $(K_N/i)^{\alpha}$ times the weight of the last
discovered symbol, symbol $K_N$. The prior must lift its
coordinate for symbol $i$ by the same factor, which is
$\alpha \log_2 (K_N/i)$ bits, and each bit of lift costs about one
bit of code (derived in the depth test below). Averaged over the
discovered symbols $i \le K_N$, the lift is
$\alpha \log_2\!\bigl(K_N^{K_N}/K_N!\bigr)/K_N
\to \alpha \log_2 e$ bits per discovered symbol, since
$K_N! \approx (K_N/e)^{K_N}$. This matches the direction seen in
Table~\ref{tab:alphabet}: at every $N$, the measured $B$ increases
with $\alpha$.
\end{remark}
\begin{table}[htp]
  \centering
  \begin{tabular}{c l ccc}
    \toprule
    \(\alpha\) & measured quantity
    & \(N=10^2\) & \(N=10^3\) & \(N=10^4\) \\
    \midrule
    \(1.5\) & slope               & \(0.87\) & \(0.76\) & \(0.58\) \\
            & offset \(B\) (bits) & \(-2.74\) & \(-2.17\) & \(-1.34\) \\
    \addlinespace
    \(2\)   & slope               & \(1.00\) & \(0.99\) & \(0.95\) \\
            & offset \(B\) (bits) & \(-0.83\) & \(-0.64\) & \(-0.46\) \\
    \addlinespace
    \(3\)   & slope               & \(1.04\) & \(1.04\) & \(1.04\) \\
            & offset \(B\) (bits) & \(1.40\) & \(1.82\) & \(1.94\) \\
    \addlinespace
    \(4\)   & slope               & \(1.07\) & \(1.09\) & \(1.08\) \\
            & offset \(B\) (bits) & \(2.75\) & \(3.47\) & \(3.90\) \\
    \bottomrule
  \end{tabular}
  \caption{\textbf{Alphabet scaling.}
  For each \(N\), the alphabet size varies from \(10^3\) to \(10^5\).
  Equation~\eqref{eq:alphabet} predicts slope one and an offset \(B\) that
  does not change with \(N\). The table reports both measured quantities.}
  \label{tab:alphabet}
\end{table}
 \paragraph{Data scaling.}
We hold $d$, $c$, and $\alpha$ fixed and vary only $N$. The measured exponent is minus the
slope of a straight line fitted to $\log R_N$ against $\log N$, where $R_N$ is the actual regret.
The right side of the law \eqref{eq:law} predicts how this regret
should scale: the description length of the discovered set, plus
$B$ bits per discovered symbol, divided by $N$.  For $B$ we take a single formula across the whole grid,
$\alpha\log_2 e - 2$ bits: the growth $\alpha\log_2 e$ is that of
Remark~\ref{rem:alpha-part}, and the constant $-2$ was chosen once,
after inspecting the measurements of Table~\ref{tab:alphabet}, to
work across the grid, so the prediction is partly calibrated rather
than fully independent;
moving it by $\pm 0.5$ bits moves the predicted exponents by less
than $0.01$ for $\alpha \ge 2$. The same straight-line fit through
the predicted values gives the predicted exponent.
Table~\ref{tab:data} and Figure~\ref{fig:data} label this
prediction refined and the limit $1 - 1/\alpha$ simple.
Table~\ref{tab:data} compares the slopes. The predicted exponent
differs from the measured one by at most $0.002$ at $\alpha = 2$,
by at most $0.007$ at $\alpha = 3$, and by $0.013$ to $0.019$ at
$\alpha = 4$. The gaps at $\alpha = 3$ and $4$ have an
identifiable origin: the prediction fixes $B$ while
Table~\ref{tab:alphabet} shows it rising slowly with $N$.
Repeating the prediction with the measured $B(\alpha,N)$,
interpolated linearly in $\log N$, shrinks these four gaps to at most
$0.002$. At $\alpha = 1.5$ the prediction is within $0.015$ of the
measurement at $d = 10^5$ and off by $0.10$ at $d = 10^3$; this difference is explained in the last paragraph.
\begin{table}[htp]
  \centering
  \begin{tabular}{c c cc cc}
    \toprule
    & simple & \multicolumn{2}{c}{\(d=10^3\)}
    & \multicolumn{2}{c}{\(d=10^5\)} \\
    \(\alpha\) & prediction & refined & measured & refined & measured \\
    \midrule
    \(1.5\) & \(0.333\) & \(0.619\) & \(0.520\) & \(0.421\) & \(0.436\) \\
    \(2\)   & \(0.500\) & \(0.607\) & \(0.608\) & \(0.544\) & \(0.546\) \\
    \(3\)   & \(0.667\) & \(0.696\) & \(0.689\) & \(0.678\) & \(0.672\) \\
    \(4\)   & \(0.750\) & \(0.754\) & \(0.735\) & \(0.744\) & \(0.731\) \\
    \bottomrule
  \end{tabular}
  \caption{\textbf{Data scaling.}
  Predicted and measured exponents over the tested range
  \(100\leq N\leq10^4\). The simple prediction is \(1-1/\alpha\).
  The refined prediction evaluates the finite-size law on the same values
  of \(N\), with the fixed offset \(B=\alpha\log_2 e-2\) bits per
  discovered symbol; see the text.}
  \label{tab:data}
\end{table}
The limit of the slopes is the discovery rate: by the simple law
\eqref{eq:leading} the slope tends to $1 - 1/\alpha$, the rate at
which the sample reveals new symbols. Over the tested range the
measured slopes sit above this limit at $\alpha = 2$ and $3$ and
below it at $\alpha = 4$. The dominant deviation is the fall of
the regret per discovered symbol $P = N R_N / k_N$. By
\eqref{eq:law} with $\log_2\binom{d}{k} \approx k \log_2(ed/k)$ it
is $P \approx \log_2(ed/k_N) + B$, so each factor of $e$ in $N$
raises $k_N$ by $e^{1/\alpha}$ and lowers $P$ by
$\log_2 e/\alpha$ bits if $B$ does not change, and the slope
becomes
\begin{equation}
\beta_{\mathrm{eff}}
\;=\; 1 - \frac{d\ln k_N}{d\ln N} - \frac{d\ln P}{d\ln N}
\;=\; \Bigl(1 - \tfrac1\alpha\Bigr) + \frac{\log_2 e}{\alpha P}.
\label{eq:beta-eff}
\end{equation}
At \(\alpha=2\), the second term increases from \(0.11\) to \(0.18\)
over the tested range of \(N\) at \(d=10^3\), and from \(0.055\) to
\(0.070\) at \(d=10^5\).  These local corrections are larger than the
fitted excesses \(0.108\) and \(0.046\), because Equation \eqref{eq:beta-eff} omits
two effects that act in the opposite direction: the $-\tfrac12$ of
\eqref{eq:kN-asym} and the rise of $B$ with $N$. At $\alpha = 3$
they cancel about half of the second term; at $\alpha = 4$ they
are as large as it, and the measured slopes land below the limit.
The prediction assumes that the sample reveals a small fraction of
the alphabet. At $\alpha = 1.5$ this holds at $d = 10^5$, where
the predicted and measured slopes are $0.42$ and $0.44$, and fails
at $d = 10^3$, where $N = 10^4$ reveals $42\%$ of the alphabet and
the slopes are $0.62$ predicted against $0.52$ measured.
Figure~\ref{fig:data} shows the comparison at every tested
alphabet size.
\begin{figure}[htp]
  \centering
  \includegraphics[width=0.92\textwidth]
  {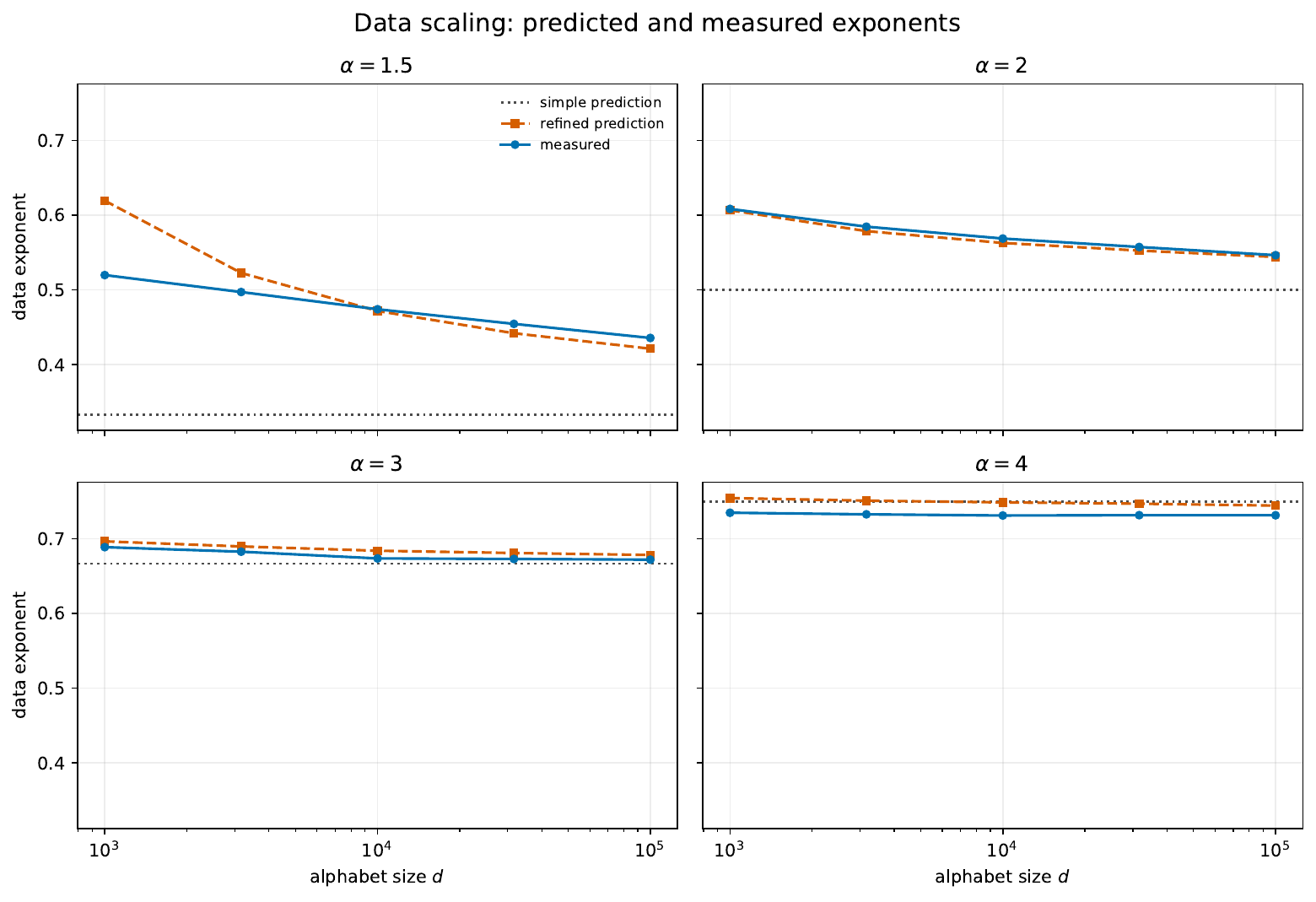}
  \caption{\textbf{Data scaling.}
  The exponent obtained from the measured regret is compared with the simple
  prediction \(1-1/\alpha\) and with the finite-size prediction evaluated on
  the same values of \(N\). The refined prediction follows the measurements
  closely for \(\alpha=2\) and \(3\). At \(\alpha=1.5\), it fails for the
  smallest alphabet, where the sample reveals a substantial fraction of all
  symbols. At \(\alpha=4\), the measured exponent remains slightly below
  both predictions.}
  \label{fig:data}
\end{figure}
\paragraph{Depth scaling.}
In both tests above, the depth was fixed at $c = \cstar$, and the quantity we worked with was the regret per discovered symbol,
$P = N R_N / k_N$. The third test asks how it depends on the depth: we fix $d = 10^{5}$,
$N = 10^{3}$, and $\alpha$, and vary $c$. Figure~\ref{fig:depth}
shows the result. Shallow depths are clearly worse, while the curves flatten near
\(\cstar\). The best tested depth improves on \(\cstar\) by \(1.2\%\) at
\(\alpha=2\), \(5.9\%\) at \(\alpha=3\), and \(9.5\%\) at \(\alpha=4\);
see Figure~\ref{fig:depth} and Table~\ref{tab:depth}.
Two facts explain this shape. First, the price of naming can never
be below one bit of code per bit of description. The prior treats
all symbols the same, so its weight splits equally over the
$\binom{d}{k}$ possible placements of $k$ discovered symbols, and
any one placement can carry at most a fraction
$\binom{d}{k}^{-1}$ of it: the full description length must always
be paid. Second, whether the prior reaches this floor depends on
where its weight typically sits, and that is what depth controls
(Section~\ref{sec:spectrum}). Past $\cstar$, a typical draw from the prior already
concentrates its mass on a few coordinates. The only information
missing is which coordinates these are, and that is exactly the
description: the price is one, and extra depth cannot lower it.
This is why the curves are flat past $\cstar$, and why the slope
measured in the alphabet test, which runs at $c = \cstar$, comes
out close to one for $\alpha \ge 2$. Below $\cstar$, a typical draw is spread over the whole
simplex. Sparse vectors are then unusual for the prior, and
reaching them costs extra weight on top of the description; each
missing layer makes this surcharge larger.
The surcharge can be computed. A coordinate that carries visible
mass must have its logarithm about $\ln d$ above the typical
level, and this excess must be assembled from the $L = c \ln d$
layers, each contributing $1/c$ on average. The cost for one layer
to shift its logarithm by a given amount is a standard convex
function $\rate$, determined by the distribution of a single
layer factor; for our factors, $\rate$ is the Legendre
transform of $\log\Gamma(1+s)$. The total cost is then $L\,\rate\!\Bigl(\tfrac1c\Bigr)$
against a required shift of $\ln d$, and their ratio is the price
per bit of description:
\begin{equation}
\price(c) \;=\; c\,\rate\!\Bigl(\tfrac1c\Bigr) \quad (c \le \cstar),
\qquad
\price(c) \;=\; 1 \quad (c \ge \cstar).
\label{eq:Ac}
\end{equation}
Because the digamma function satisfies $\psi(2) = 1-\gamma$, three exact facts follow:
$\price(\cstar) = 1$, $\price'(\cstar) = 0$, and
$\price''(\cstar) = (1-\gamma)^3/(\pi^2/6-1) \approx 0.117$. The
price reaches its floor exactly at $\cstar$, and it does so flatly.
On the shallow side, $\price(1) = 1.31$, $\price(1.5) = 1.07$, $\price(2) = 1.01$.
This matches the measurements: the steep rise at small $c$, the
broad minima of Table~\ref{tab:depth}, which are broad because
$\price'(\cstar) = 0$, and the absence of any kink in the regret at the
phase transition, which sits exactly at the flat minimum.
\begin{remark}
The argument behind \eqref{eq:Ac} takes the required growth of a
coordinate to be exactly a factor of $d$, and it ignores how the
typical size of all the other coordinates shifts with depth. It is
therefore reliable near and above $\cstar$ and overshoots for
$c \lesssim 1$. A full derivation through the induced density of
Section~\ref{sec:spectrum} is open; it would also give the dependence
of the offset $B$ on $N$, and the phase structure it needs is that of
\cite{jeong2026phase}.
\end{remark}
\begin{figure}[htp]
  \centering
  \includegraphics[width=0.98\textwidth]
  {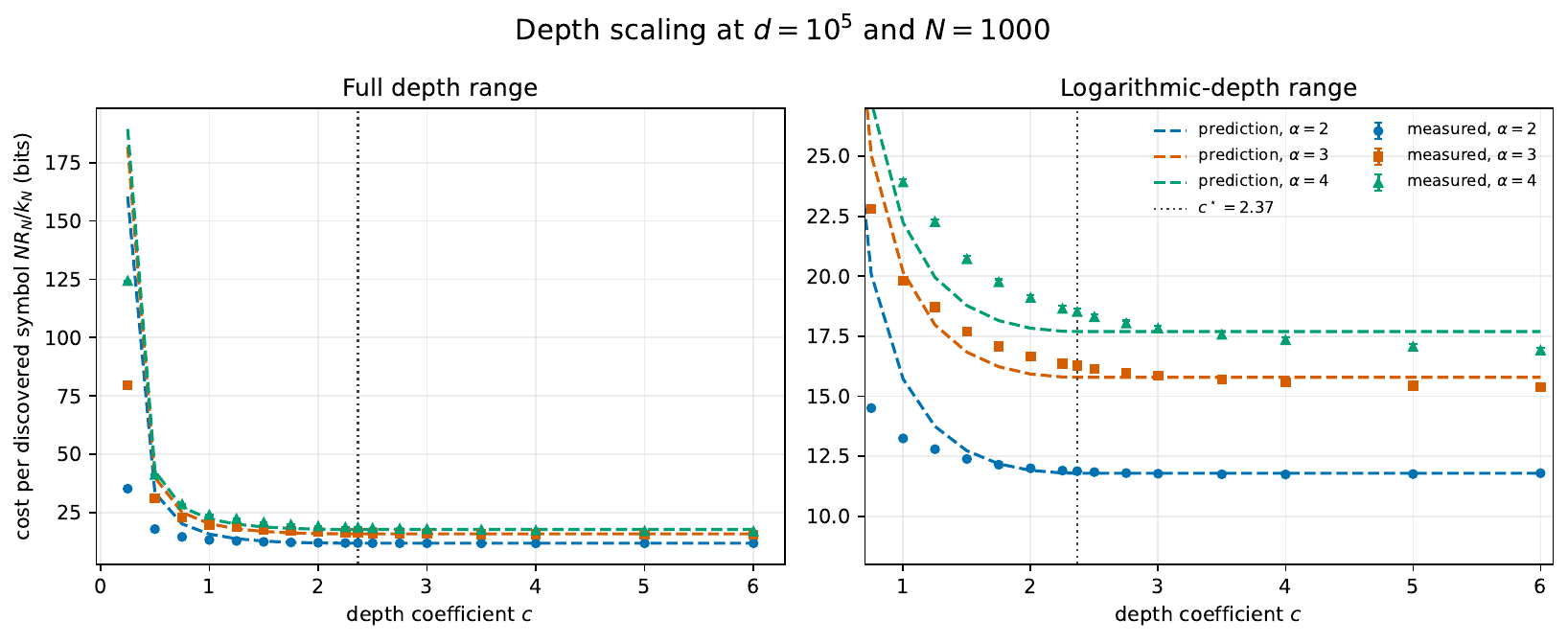}
  \caption{\textbf{Depth scaling.}
  Measured regret per discovered symbol at \(d=10^5\) and \(N=10^3\),
  together with the prediction in Equation~\eqref{eq:Ac}. The left panel
  shows the full range. The right panel enlarges the logarithmic-depth
  range. The prediction captures the broad flattening near \(\cstar\), but
  overestimates it at the shallowest depths, as expected from the
  approximation used to derive it.}
  \label{fig:depth}
\end{figure}
\begin{table}[t]
  \centering
  \begin{tabular}{ccc}
    \toprule
    \(\alpha\) & best tested \(c\) & \(P(\cstar)/P_{\min}\) \\
    \midrule
    \(2\) & \(4\) & \(1.012\) \\
    \(3\) & \(6\) & \(1.059\) \\
    \(4\) & \(6\) & \(1.095\) \\
    \bottomrule
  \end{tabular}
  \caption{\textbf{Depth scaling.}
  Measured dependence of the regret per discovered symbol on depth at
  \(d=10^5\) and \(N=10^3\).}
  \label{tab:depth}
\end{table}
\paragraph{Conclusion.}
The three experiments support a common account of the regret. After
$N$ samples, about $k_N$ symbols have been seen. Saying which ones
takes $\log_2\binom{d}{k_N}$ bits, and as long as the sample
reveals only a small fraction of the alphabet, the prior pays this
in full and not more: for $\alpha \ge 2$ the measured slopes are close to
one (Table~\ref{tab:alphabet}), and the depth test derives the value one
at $c = \cstar$.
At $\alpha = 1.5$ the smallest
alphabets leave this regime, and the description term is no longer
the right measure of the naming cost. Each discovered symbol costs a
further $B$ bits, where $B$ grows with the concentration of the
target (Remark~\ref{rem:alpha-part}) and drifts slowly with $N$; we
report it rather than model it. Together, the finite-size law \eqref{eq:law} and its leading form
\eqref{eq:leading} summarize the account: the exponent is the rate of
symbol discovery, and depth, once logarithmic, moves only the constants,
in the way \eqref{eq:Ac} states.

\section{Competitive Comparison: Classical, Good--Turing, Ristad, and the LSA
Predictor}
\label{sec:comparison}
The scaling analysis says the log-depth prior handles heavy-tailed targets on
large alphabets well.  The natural external benchmark is the line of work on
competitive distribution estimation, where Good--Turing-type estimators are
provably near-oracle \cite{orlitsky2015competitive,acharya2013tight,
orlitsky2003always,mcallester2000leave}.  This section uses the LSA prior on the experimental suite of Orlitsky and Suresh
\cite{orlitsky2015competitive} and adds Ristad's natural law of succession
\cite{ristad1995natural}, which we argue is the closest classical relative of
the LSA prior.
\subsection{Setup}
\label{sec:comparison-setup}
Following \cite{orlitsky2015competitive}, the support size is \(d=10^4\).
Their six targets are used: uniform; a step distribution with half the
symbols of probability \(\tfrac1{2d}\) and half \(\tfrac3{2d}\); Zipf with
\(\alpha=1\) and \(\alpha=1.5\); and random targets drawn once per trial
from the Dirichlet-\(1\) and Dirichlet-\(\tfrac12\) priors on
\(\Delta_d\), together with five additional targets that widen the range
of shapes: Zipf with \(\alpha=2\), \(3\), \(4\), and \(5\) (increasingly
concentrated; at \(\alpha=5\) the top symbol carries \(96\%\) of the mass
and its count approaches \(2\cdot10^4\) at the largest sample size,
handled by the heavy-count extension of
Appendix~\ref{sec:app-anchors}); and a geometric
law \(p_i\propto0.998^{\,i}\), whose probabilities decay exponentially
over an effective support of about \(500\) symbols.
For each target and each
\(n\in\{1000,2000,3000,5000,7000,10^4,1.4\cdot10^4,2\cdot10^4\}\), a sample
of size \(n\) is drawn, every estimator is given the resulting counts
\((m_1,\dots,m_d)\) (and the support size \(d\)), and the loss is the
divergence \(\KL(p\,\|\,\hat q)\) between the true distribution and the
estimate, averaged over \(20\) independent trials with common samples across
estimators.  For a Bayesian mixture this estimate is the predictive
distribution \(\hat q(i)=Q(x_{n+1}=i\mid x^n)\), so the metric is the
instantaneous form of the cumulative regret studied above: the per-symbol
online regret is exactly the time average of these predictive divergences.
The estimators are:
\begin{itemize}
\item \emph{Add-constant rules}: add-one (Laplace), which by
  Proposition~\ref{prop:laplace} is the prior induced by the LSA at \(L=1\);
  add-half (Krichevsky--Trofimov \cite{krichevsky1981performance}); and the
  Braess--Sauer rule \cite{braess2004bernstein}, which adds
  \(\tfrac12\) to unseen, \(1\) to once-seen, and \(\tfrac34\) to
  multiply-seen symbols before normalizing.
\item \emph{Good--Turing + empirical}: the hybrid of
  \cite{orlitsky2015competitive}: with \(c_t\) the number of symbols
  appearing \(t\) times, a symbol seen \(t\) times receives (before
  normalization) the empirical mass \(t/n\) if \(t>c_{t+1}\), and the
  Good--Turing mass \((c_{t+1}+1)(t+1)/(n\,c_t)\) otherwise;
  unseen symbols share the \(t=0\) assignment.
\item \emph{Ristad's natural law of succession} \cite{ristad1995natural}:
  with \(s\le d\) distinct symbols observed in \(n\) samples, if \(s=d\) the
  rule is add-one; if \(s<d\),
  \[
    \hat q(i)=
    \begin{cases}
      \dfrac{(m_i+1)(n+1-s)}{n^2+n+2s}, & m_i>0,\\[1.2ex]
      \dfrac{s(s+1)}{(d-s)\,(n^2+n+2s)}, & m_i=0 .
    \end{cases}
  \]
  The rule arises from a hierarchical uniform prior over the size and
  identity of the support followed by a uniform Dirichlet within it.
  However, for this rule, as noted in \cite{hutter2014}, the code, or the prediction it implies for the entire sequence does not induce a predictive probability of any prior: its sequence
probability is order-dependent (see discussion in Section~\ref{sec:comparison-findings} and Section~\ref{sec:bible}).
\item \emph{Natural oracle}: the genie of \cite{orlitsky2015competitive}
  that knows \(p\) but must assign the same probability to all symbols with
  the same count: \(\hat q(i)=S_{m_i}/c_{m_i}\), with \(S_t\) the true
  total probability of symbols appearing \(t\) times.  It lower-bounds every
  natural (count-based) estimator, including all of the above.
\item \emph{The prior induced by the LSA} at \(L=5\) and
  \(L=22=\operatorname{round}(c^\star\ln d)\), computed exactly from the
  count profile by the methods of Appendix~\ref{sec:app-mixture-terms}.
\item \emph{Depth-averaged LSA predictor}
  \eqref{eq:depth-average} with \(L_{\max}=80\), comfortably above \(2c^\star\ln d\approx 44\), so
that the most concentrated targets cannot be limited by the ceiling (the
effect of the ceiling is examined in the findings below).  After
observing
  \(x^n\), its next-symbol estimate is the weighted average
  \[
    \hat q(i)
    =
    \sum_{L=1}^{80} w_L(x^n)\,\hat q_L(i),
    \qquad
    w_L(x^n)
    =
    \frac{Q^{(L)}_n(x^n)}{\sum_{L'=1}^{80} Q^{(L')}_n(x^n)},
  \]
  where \(\hat q_L\) is the depth-\(L\) predictive and the weight
  \(w_L\) is the (posterior) probability of depth \(L\) given the data.
  The weights require the mixture likelihoods \(Q^{(L)}_n(x^n)\)
  themselves, not just ratios; these are computed to about \(10^{-3}\)
  nats by the exact-evaluation method of
  Appendix~\ref{sec:app-exact-rows}.
\end{itemize}
\begin{figure}[t]
  \centering
  \includegraphics[width=\textwidth]{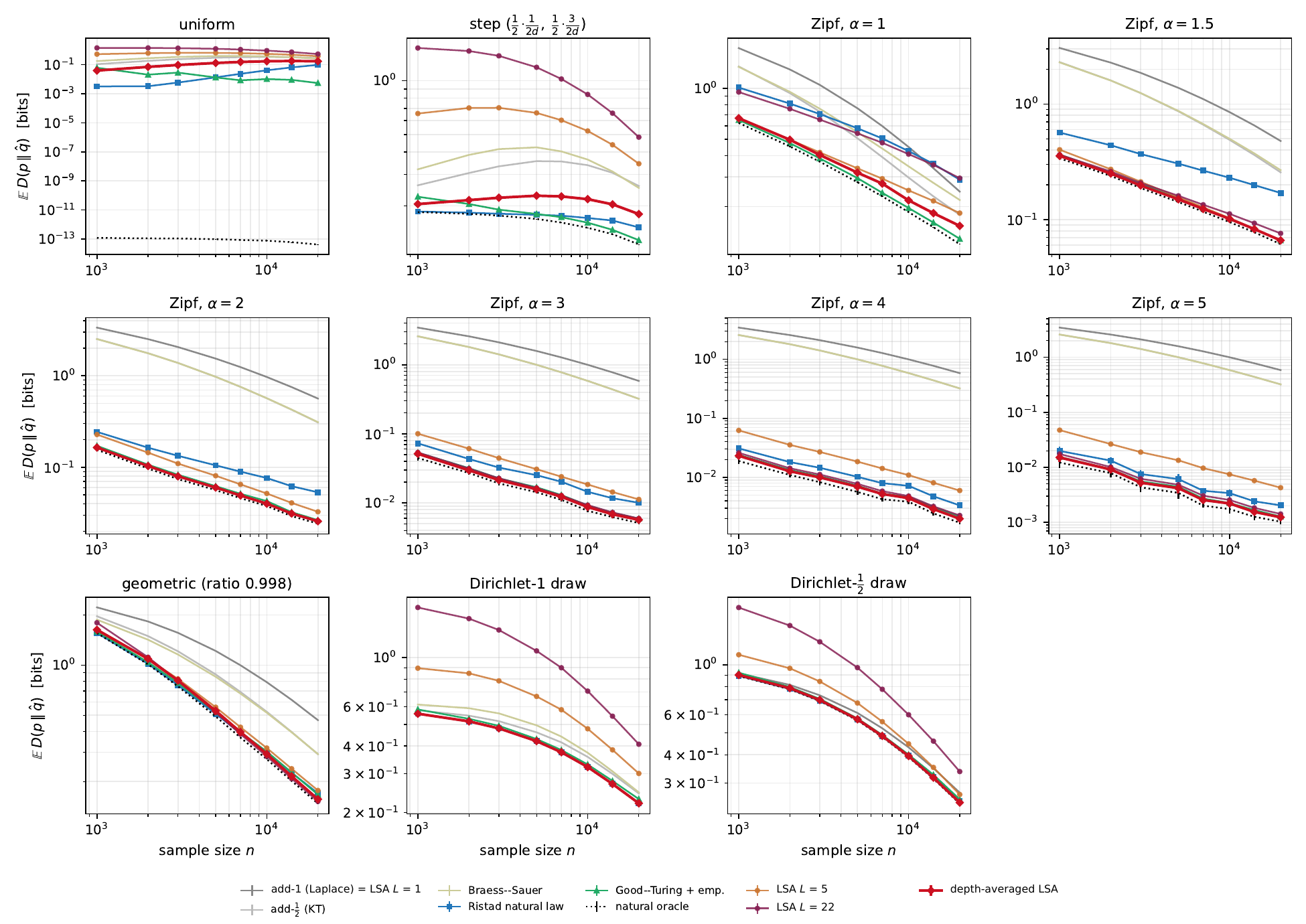}
\caption{Competitive comparison on the benchmark of
\cite{orlitsky2015competitive}, extended to eleven targets
(\(d=10^4\); log-log axes; \(20\) trials, standard-error bars).
The depth-averaged LSA predictor (red diamonds) performs well on
\emph{every} panel: it matches the best add-constant rule on the flat
targets and the best overall rule on eight of eleven panels. On the
uniform target, where it lags behind Good--Turing and Ristad, all
predictors are far from the natural-oracle curve, which is
\(\approx0\).}

  \label{fig:orlitsky}
\end{figure}
\begin{table}[t]
  \centering
  \small
  \setlength{\tabcolsep}{4.5pt}
  \begin{tabular}{lcccccc|ccc}
    \toprule
    & \multicolumn{6}{c|}{classical} & \multicolumn{3}{c}{LSA}\\
    target & add-1 & KT & BS & Ristad & GT & oracle & \(L{=}5\) & \(L{=}22\) & avg over \(L\)\\
    \midrule
    uniform            & 0.174 & 0.285 & 0.262 & 0.098 & 0.005 & 0.000 & 0.387 & 0.544 & 0.174\\
    step               & 0.180 & 0.258 & 0.252 & 0.152 & 0.129 & 0.122 & 0.344 & 0.484 & 0.180\\
    Zipf \(\alpha=1\)  & 0.246 & 0.179 & 0.219 & 0.288 & 0.129 & 0.120 & 0.183 & 0.295 & 0.154\\
    Zipf \(\alpha=1.5\)& 0.479 & 0.257 & 0.268 & 0.170 & 0.066 & 0.061 & 0.066 & 0.076 & \textbf{0.066}\\
    Zipf \(\alpha=2\)  & 0.564 & 0.310 & 0.311 & 0.053 & 0.027 & 0.024 & 0.033 & 0.026 & \textbf{0.026}\\
    Zipf \(\alpha=3\)  & 0.582 & 0.321 & 0.321 & 0.0101 & 0.0059 & 0.0051 & 0.0112 & 0.0060 & \textbf{0.0057}\\
    Zipf \(\alpha=4\)  & 0.584 & 0.322 & 0.322 & 0.0033 & 0.0022 & 0.0017 & 0.0060 & 0.0023 & \textbf{0.0020}\\
    Zipf \(\alpha=5\)  & 0.584 & 0.322 & 0.322 & 0.0020 & 0.0013 & 0.0010 & 0.0043 & 0.0014 & \textbf{0.0012}\\
    geometric          & 0.467 & 0.292 & 0.292 & 0.170 & 0.167 & 0.147 & 0.177 & 0.152 & \textbf{0.156}\\
    Dirichlet-1 draw   & 0.221 & 0.244 & 0.247 & 0.221 & 0.230 & 0.220 & 0.301 & 0.407 & \textbf{0.221}\\
    Dirichlet-\(\tfrac12\) draw & 0.272 & 0.244 & 0.249 & 0.250 & 0.254 & 0.243 & 0.267 & 0.337 & \textbf{0.246}\\
    \bottomrule
  \end{tabular}
  \caption{Mean \(\KL(p\|\hat q)\) in bits at \(n=2\cdot10^4\), \(d=10^4\),
  \(20\) trials; standard errors \(\le0.001\) bits except the Dirichlet
  rows (\(\le0.013\)).  The last column is the depth-averaged LSA predictor
  \eqref{eq:depth-average} over all depths \(L=1,\dots,80\) (see the text), a \emph{single} prior
  with no tuned constants.
Bold marks rows where it beats every classical (non-LSA) method or ties
the best of them within one standard error;
Section~\ref{sec:comparison-findings} discusses the pattern row by row.
  }
  \label{tab:orlitsky}
\end{table}
\begin{table}[t]
  \centering
  \small
  \begin{tabular}{lll}
    \toprule
    target & posterior over depths at \(n=10^3\) & at \(n=2\cdot10^4\)\\
    \midrule
    uniform & \(L{=}1:1.00\) & \(L{=}1:1.00\)\\
    step & \(L{=}1:1.00\) & \(L{=}1:1.00\)\\
    Zipf \(\alpha=1\) & \(L{=}5:0.71,\ L{=}6:0.29\) & \(L{=}3:1.00\)\\
    Zipf \(\alpha=1.5\) & \(L{=}15:0.20,\ L{=}14:0.19,\ L{=}16:0.16,\ L{=}13:0.13\) & \(L{=}9:0.63,\ L{=}10:0.36\)\\
    Zipf \(\alpha=2\) & \(L{=}31,30,32:0.06\) each (broad, \(L\approx26\) to \(36\)) & \(L{=}24,23,25:{\approx}0.15\) each\\
    Zipf \(\alpha=3\) & broad: \(94\%\) of mass on \(L\in[41,80]\) & broad, mode \(L{\approx}70\); \(99\%\) on \(L\in[41,80]\)\\
    Zipf \(\alpha=4\) & broad: \(96\%\) on \(L\in[41,80]\) & \(85\%\) on \(L\in[61,80]\), top weight \(0.07\)\\
    Zipf \(\alpha=5\) & broad: \(96\%\) on \(L\in[41,80]\) & \(86\%\) on \(L\in[61,80]\), top weight \(0.07\)\\
    geometric & \(L{=}4:0.73,\ L{=}5:0.27\) & \(L{=}10:0.92,\ L{=}9:0.08\)\\
    Dirichlet-1 & \(L{=}1:1.00\) & \(L{=}1:1.00\)\\
    Dirichlet-\(\tfrac12\) & \(L{=}2:0.70,\ L{=}1:0.30\) & \(L{=}2:1.00\)\\
    \bottomrule
  \end{tabular}
  \caption{Mean posterior weights \(w_L\propto q_L(x^n)\) of the
depth-averaged predictor (entries below \(0.05\) omitted). The data
select depth as the complexity-spectrum analysis predicts: shallow for
flat targets, deep for heavy-tailed ones.}
  \label{tab:posteriors}
\end{table}
\subsection{Findings}
\label{sec:comparison-findings}
The short summary is that the scheme does well: the depth-averaged
LSA predictor, which is Bayesian and has all the theoretical and practical advantages of a Bayesian, with the layer-induced prior, beats every classical method, or ties the best of them within
one standard error, on eight of the eleven targets; the exceptions, all in
favor of Good--Turing, are the flattest targets (uniform, step, and Zipf
\(\alpha=1\)), and the uniform case, where the gap is largest and most
instructive, is discussed at the end of this section.
The paragraphs below go through the evidence, and
Section~\ref{sec:bible} takes the same estimators to a real text.
\paragraph{On heavy-tailed targets, the LSA prior is
Good--Turing-competitive.}
On Zipf \(\alpha=1.5\), the most skewed Zipf target in the original
benchmark of \cite{orlitsky2015competitive}, the \(L=5\)
mixture attains \(0.066\) bits at \(n=2\cdot10^4\), equal to Good--Turing
(\(0.066\)) and within about \(8\%\) of the natural oracle (\(0.061\));
at \(n=1000\) the deeper members are slightly \emph{ahead} of Good--Turing.
Fixed depths beat Ristad (\(0.170\) at \(n=2\cdot10^4\)) by more than a
factor of two and the add-constant rules by factors of \(3\) to \(7\). On
Zipf \(\alpha=1\), the depth-averaged predictor (\(0.154\)) is second only to
Good--Turing (\(0.129\)), ahead of every other estimator 
and far ahead of Ristad (\(0.288\)) and
add-one (\(0.246\)).  No component of these mixtures was designed around
count frequencies; the competitive behavior emerges from where logarithmic
depth places prior mass.
\paragraph{On strongly concentrated targets the average overtakes
Good--Turing.}
The concentrated targets sharpen the picture.  On Zipf \(\alpha=2\) the
depth-averaged predictor attains \(0.165\) bits at \(n=10^3\) and
\(0.026\) at \(n=2\cdot10^4\), ahead of Good--Turing (\(0.172\) and
\(0.027\)) at both sample sizes and within about \(6\) to \(8\%\) of the natural
oracle; on Zipf \(\alpha=3\) it attains \(0.0057\) versus Good--Turing's
\(0.0059\), close to the oracle (\(0.0051\)); on Zipf \(\alpha=4\)
it reaches \(0.0020\) bits at \(n=2\cdot10^4\) versus Good--Turing's
\(0.0022\), near the oracle (\(0.0017\)); and on Zipf \(\alpha=5\) it
attains \(0.0012\) against Good--Turing's \(0.0013\) and the oracle's
\(0.0010\), the differences at the scale of one standard error
(\(0.0001\)).  The add-constant rules remain \(2\) to \(3.5\)
bits away at small \(n\) on all of these.  
\paragraph{The posterior weights tell the
architectural story.} It is interesting to examine the posterior weights of the depths, up to
\(L_{\max}=80\), which is roughly \(3.7\,c^\star\ln d\).  On \(\alpha=2\) the data
select a band around \(L\approx23\) to \(25\), just beyond
\(c^\star\ln d \approx 22\).  On \(\alpha=3\), \(4\), and \(5\), 
the posterior spreads almost
flat over the deep end (for \(\alpha=4,5\), \(85\)--\(86\%\) of the mass on
\(L\in[61,80]\) with top weight only \(0.07\)), while the regret is
unchanged to the fourth decimal.
Depth beyond roughly \(2c^\star\ln d\) neither helps nor hurts.  This
is the freezing transition of Section~\ref{sec:scaling}, now visible
in the likelihood itself: past \(c^\star\) the price of naming a
discovered symbol, \(\price(c)\) of Equation~\eqref{eq:Ac}, equals one no
matter how deep the prior, so all sufficiently deep members of the
family predict alike and the data cannot tell them apart.

\paragraph{Choosing \(L_{\max}\).}
The experiments above settle a practical question: how deep should the
family be? The evidence converges on
\(L_{\max}=\lfloor 2c^\star\ln d\rceil\approx 3.3\log_2 d\), about three
layers per bit of alphabet (\(44\) at \(d=10^4\), \(65\) at \(10^6\)); the
factor of two over \(c^\star\ln d\) is margin for the finite-\(d\)
crossover, and even the most concentrated target tested gains nothing
beyond it. The two provisioning errors are sharply asymmetric: too small
a ceiling costs real bits on concentrated targets, while too large a
ceiling costs \(\log_2\) of the overshoot factor in total, one bit here
for doubling, with every flat-target row unchanged to three decimals.
When in doubt, round up; the only genuine cost of a generous ceiling is
computation, which grows linearly in \(L_{\max}\).

\paragraph{The geometric target.}
The geometric target rewards moderate depth: the posterior settles at
\(L\approx10\), and the depth-average (\(0.156\) bits at
\(n=2\cdot10^4\)) is ahead of Good--Turing
(\(0.167\)) and Ristad (\(0.170\)).  
Here the fixed depth $L=22$ is slightly better (\(0.152\)), but the
depth-average is close behind while remaining universal.
The target’s exponential decay is neither flat
nor power-law, and no single classical rule is tuned for it; the average
finds the right amount of prior sparsity by itself.
\paragraph{The price appears exactly where the theory says it should.}
On the uniform and step targets, and on the Dirichlet draws (which are
flat targets with \(\Theta(1/d)\) coordinates), the deep priors pay:
\(L=22\) is the worst estimator on the uniform panel (\(0.544\) bits at
\(n=2\cdot10^4\)) while the family’s shallow end, add-one \(=L{=}1\), is
second only to Ristad among the non-GT rules (\(0.174\)).  This is the spectrum tilt
of Section~\ref{sec:spectrum} evaluated on external benchmarks: a deep prior
is not uniformly better; it reallocates regret from skewed to flat targets.
Good--Turing, by contrast, adapts its effective smoothing to the count
statistics and is near-oracle on every panel; this is precisely the content
of its competitive guarantees \cite{orlitsky2015competitive}.

\paragraph{The depth-averaged predictor: one prior, best of the family,
empirically.}
Remark~\ref{rem:double-mix} promises that averaging over all depths incurs essentially no penality relative to the best depth. The ‘‘avg over \(L\)’’
column of Table~\ref{tab:orlitsky} and the red curve of
Figure~\ref{fig:orlitsky} verify the promise on external benchmarks. The
posterior weights (Table~\ref{tab:posteriors}) collapse onto \(L=1\) on
the flat targets, making the average \emph{exactly} add-one there, and
select intermediate depths on the skewed ones; the row-by-row numbers are
in Table~\ref{tab:orlitsky}. Two facts are new here. First, on Zipf
\(\alpha=1.5\) the average is \emph{ahead} of Good--Turing at every
\(n\le1.4\cdot10^4\) (e.g.\ \(0.356\) vs.\ \(0.366\) at \(n=10^3\)).
Second, the posterior adapts: the selected depth \emph{decreases} from
about \(15\) to about \(9\) as \(n\) grows; once the data pin down the
heavy symbols, less prior sparsity is required, and the average
reallocates automatically.

\paragraph{Coherence: what the mixture buys that Good--Turing does not.}
The Good--Turing hybrid is an excellent estimator but not a coherent
probability assignment over sequences: its post-hoc normalization and
regime switch (empirical vs.\ Good--Turing mass) do not arise from any prior,
and its cumulative log-loss admits no mixture interpretation.  The
depth-averaged LSA predictor is a single exchangeable prior: it
simultaneously defines a sequential code, satisfies the regret identity
\eqref{eq:regret-profiles}, composes with further hierarchical modeling,
and inherits the non-uniform guarantee \eqref{eq:prior-mass-bound}.  The
comparison shows that this coherence is now essentially free: across the
whole benchmark, the coherent prior gives up meaningful ground to
Good--Turing only on the uniform target.  There Good--Turing exploits a
piece of information this prior family does not use, the observed
frequencies of the counts themselves, which on a uniform target reveal
that all symbols have nearly the same probability, and it reaches
\(0.005\) bits where the best prior in the family reaches \(0.174\).

\paragraph{Ristad's law as the closest relative.}
Ristad’s natural law is built from a two-level hierarchical picture, mass
spread first over support sizes and then uniformly within a support
\cite{ristad1995natural}, and its qualitative behavior in
Figure~\ref{fig:orlitsky} sits between the add-constant rules and
Good--Turing, much like a two-level layered prior. 
Furthermore, as noted above,
unlike the LSA predictors, Ristad's law is not exactly Bayesian:
It was derived combinatorially, and its one-step rules do not multiply
to an exchangeable probability over sequences \cite{hutter2014} (already at \(d=2\),
sequences with the same counts receive different probabilities). Ristad's sequential version is also the one used in the next 
Section~\ref{sec:bible}, to encode the entire sequence (the Bible).

Concretely, as for its performance: on Zipf \(\alpha=1.5\), Ristad improves on add-one by
\(2.8\times\) and the log-depth mixture improves on Ristad by a further
\(2.6\times\), closing essentially all the remaining gap to the oracle.
Where Ristad's two-level prior is better adapted (near-uniform targets:
\(0.098\) vs.\ Laplace's \(0.174\) on uniform), the deep prior has moved its
mass elsewhere.
\subsection{Compressing the Bible}
\label{sec:bible}
The benchmark above scores estimators against known synthetic targets.  As
a final test we compress a real text and measure each estimator by its
actual codelength.  The corpus is the entire King James Bible (Project
Gutenberg etext \#10, verse numbers removed): \(N=915{,}860\) tokens over
a fixed vocabulary of \(d=100{,}000\), the alphabet scale of modern language-model tokenizers, of which the text uses
\(13{,}550\) distinct word and punctuation types, about \(14\%\) of the
alphabet.  Every estimator is a sequential probability assignment, hence a
lossless code: its codelength is \(-\log_2\) of the probability it assigns
to the token sequence.  We report the \emph{redundancy}: codelength per
token minus the empirical unigram entropy \(\hat H_n\) of the coded prefix, the part of the codelength due to \emph{learning} the distribution
rather than to its intrinsic uncertainty.  For the exchangeable estimators
(the LSA priors, add-one, add-half) the codelength is
computed exactly from the counts; for the order-dependent rules
(Good--Turing, Ristad, Braess--Sauer) it is accumulated by coding the
actual sequence one token at a time.
Each number reported in this section is the codelength of an honest
sequential predictor: predict the next token, suffer the log-loss, update,
repeat.  Appendix~\ref{sec:app-validation} states the identity between the
batch codelength and the accumulated sequential log-loss, and verifies it
numerically on the corpus.

\begin{table}[t]
  \centering
  \small
  \begin{tabular}{rrcccccccc}
    \toprule
    \(n\) & distinct & \(\hat H_n\) & add-1 & KT & BS & Ristad & GT &
    LSA avg & posterior mode\\
    \midrule
    \(10{,}000\)  & 1{,}160  & 7.56 & 4.597 & 3.846 & 3.805 & 1.193 & 1.066 & \textbf{1.052} & \(L=21\)\\
    \(30{,}000\)  & 2{,}119  & 7.87 & 3.225 & 2.512 & 2.489 & 0.709 & 0.606 & \textbf{0.598} & \(L=21\)\\
    \(100{,}000\) & 3{,}839  & 8.02 & 1.911 & 1.359 & 1.350 & 0.381 & 0.310 & \textbf{0.306} & \(L=21\)\\
    \(300{,}000\) & 6{,}922  & 8.18 & 1.026 & 0.679 & 0.677 & 0.224 & 0.170 & \textbf{0.168} & \(L=19\)\\
    \(915{,}860\) & 13{,}550 & 8.48 & 0.478 & 0.303 & 0.302 & 0.137 & 0.097 & \textbf{0.095} & \(L=15\)\\
    \bottomrule
  \end{tabular}
  \caption{Compressing the King James Bible (\(d=100{,}000\),
  word-and-punctuation tokens).  Redundancy in bits per token (codelength
  per token minus the empirical unigram entropy \(\hat H_n\)) for prefixes
  of \(n\) tokens.  The depth-averaged LSA predictor is the
  best method at every prefix length; its posterior over depths is
  essentially a point mass that drifts from \(L=21\) down to \(L=15\) as
  the sample grows.}
  \label{tab:bible}
\end{table}
\begin{figure}[t]
  \centering  \includegraphics[width=\textwidth]{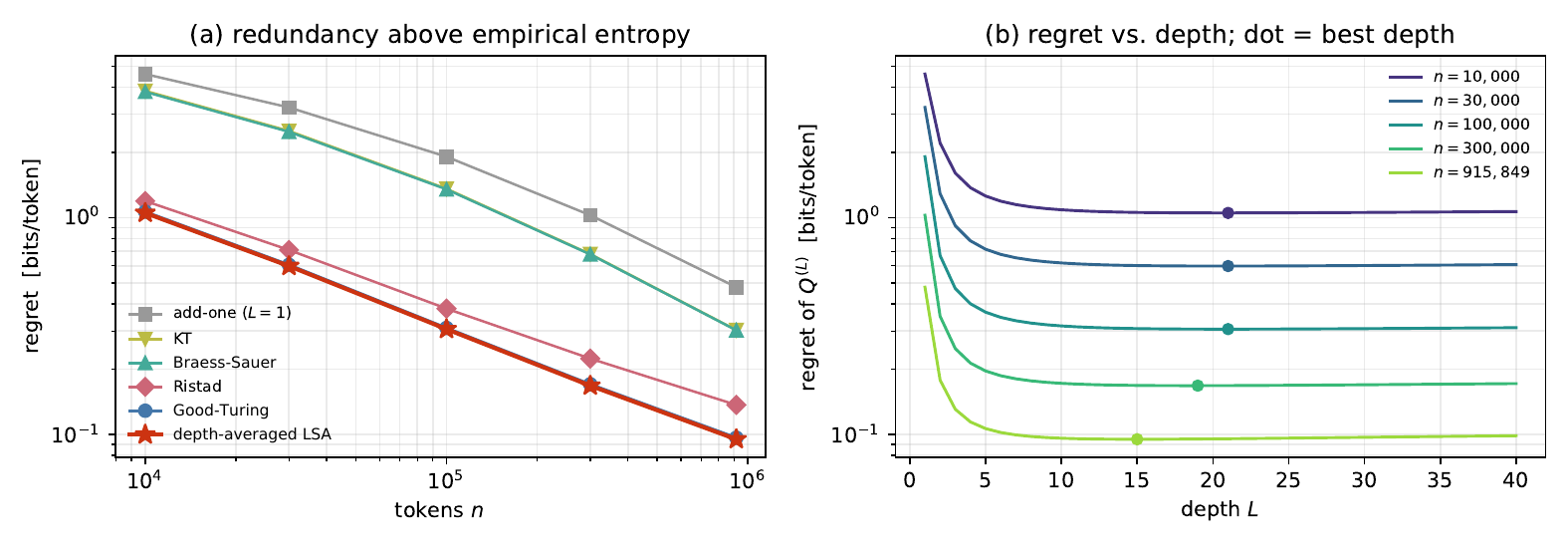}
  \caption{The Bible experiment.  (a)~Redundancy above the empirical
  entropy for each method as the coded prefix grows (log-log); the
  depth-averaged LSA predictor (red) is lowest at every
  \(n\).  (b)~Redundancy of the single-depth mixtures \(Q^{(L)}\) as a
  function of depth, at each prefix length; the minimizing depth (dots)
  moves from \(L=21\) to \(L=15\) as \(n\) grows, and the depth-average
  matches the minimum to four decimal places throughout.}
  \label{fig:bible}
\end{figure}
Table~\ref{tab:bible} and Figure~\ref{fig:bible} show the outcome.  Three
observations.  \emph{First}, the depth-averaged predictor is the best
method tested at every prefix length, ahead of Good--Turing by a small
but consistent margin (\(0.095\) vs.\ \(0.097\) bits per token on the full
text, about \(1{,}800\) bits over the whole book), ahead of Ristad's law
by \(14\text{ to }44\%\), and ahead of add-one by a factor of about \(4\) to
\(6\).  The margins over the add-constant rules are exactly the point of
the layered architecture: a real text over a large fixed vocabulary is
sparse and heavy-tailed, and a depth-one prior wastes its mass on the
\(86\%\) of the alphabet that never occurs.  \emph{Second}, the posterior
over depths is decisive and interpretable.  At every checkpoint it
concentrates on one or two adjacent depths (at \(n=N\) it puts mass
\(0.9999991\) on \(L=15\)); with \(\ln d\approx11.5\), the selected depths
\(15\) to \(21\) are \((1.3\text{ to }1.8)\ln d\), squarely in the logarithmic
regime of Section~\ref{sec:spectrum}, and the drift downward as \(n\)
grows, \(21\to19\to15\), is the same adaptation seen on the synthetic
Zipf targets: once the data pin down the frequent tokens, less prior
sparsity is needed.  \emph{Third}, averaging costs nothing: the
depth-average matches the best single depth to four decimal places at
every checkpoint, consistent with the \(\log_2(L_{\max})/n\) bound of
Remark~\ref{rem:double-mix}, which is at most \(6.3\cdot10^{-4}\) bits
per token at the smallest checkpoint ($L_{\max}=80$, $n=10^4$).  As a check of the opposite regime, we also ran a byte-pair
vocabulary trained on the corpus itself: such a tokenizer is built to
flatten the token distribution, and accordingly the posterior collapses onto the shallowest depths (L = 2, against L = 15 on the word stream) and all reasonable methods tie, the two ends of the
complexity spectrum, detected automatically from the data.
\FloatBarrier
\subsection{Adding memory}
\label{sec:memory}
The framework treats memory as a second architectural axis, orthogonal to
depth: a context structure decides how the data are partitioned, and the
depth of each conditional prior decides what shape that conditional is
expected to have.  We test the simplest instance on the same corpus: an
order-one model whose states are the \(M\) most frequent tokens plus a
single backoff state for all other histories.  Conditioned on the state
the successor sub-sequence is exchangeable, so the mixture factorizes over
states and the total codelength is a sum of per-state codelengths, each
computed exactly by the machinery of Section~\ref{sec:exact-regret}.
Every state carries its own depth-averaged predictor (\(L\le40\) and
\(L\le60\) for the two partitions of Table~\ref{tab:memory}), so
depth varies across contexts, and the number of split contexts is itself
averaged over a nested grid \(k\in\{0,64,\dots,512\}\), one more level of
hierarchical averaging \cite{FederMerhavHierarchical}, in the structure of
a single split level of a context tree
\cite{willems1995context,willems1998context} with a different leaf model.
The result remains one coherent prior with an exact regret identity. Table~\ref{tab:memory} summarizes the outcome.
\begin{table}[t]
  \centering
  \small
  \begin{tabular}{lcc}
    \toprule
     & \(M=256\), \(L\le40\) & \(M=512\), \(L\le60\)\\
    \midrule
    coverage of transitions            & \(76.6\%\) & \(83.7\%\)\\
    empirical \(H(\text{next}\mid\text{state})\) & \(6.315\) & \(6.098\)\\
    \midrule
    LSA, per-state depth avg & \(\mathbf{7.132}\) & \(\mathbf{7.100}\)\\
    Good--Turing (per state)           & \(7.143\)  & \(7.113\)\\
    Ristad (per state)                 & \(7.254\)  & \(7.227\)\\
    KT (per state)                     & \(9.781\)  & \(10.172\)\\
    add-one (per state)                & \(10.445\) & \(10.868\)\\
    \bottomrule
  \end{tabular}
  \caption{Order-one models of the Bible (bits per token; unigram
  reference: empirical entropy \(8.477\), LSA code
  \(8.572\)).  States are the \(M\) most frequent context tokens plus a
  backoff state.}
  \label{tab:memory}
\end{table}

\paragraph{Memory pays.}
The code drops from \(8.572\) to \(7.100\) bits per token, and the layered
prior remains the best method tested, ahead of per-state Good--Turing at
both partition sizes.

\paragraph{The reversal for flat priors.}
Central to the thesis of this paper is the reversal: order-one KT codes at
\(9.78\) bits per token, \emph{worse than the memoryless layered model},
because every state is another \(10^5\)-symbol alphabet on which an
add-constant prior wastes its mass, and refining the partition makes KT and
add-one strictly worse while the layered prior improves.  On large
alphabets the choice of marginal prior matters more than the memory it is
attached to.

\paragraph{Per-state depth posteriors.}
The per-state depth posteriors are themselves interpretable: conditionals
are more concentrated than the marginal, and the selected depths shift
accordingly (quartiles \(25/31/39\), i.e.\ \(2\) to \(3.5\ln d\), against
\(L^\star{=}15\) for the unigram), with collocation heads such as
``according'' selecting the deepest priors and broad contexts such as
``the'' the shallowest.

\paragraph{The number of contexts has an interior optimum.}
Extending the split count past the grid, the code rises again (\(7.10\) at
\(k=512\), \(7.23\) at \(2048\), \(7.53\) with all \(13{,}550\) contexts
split), because a context seen a handful of times cannot amortize the
\(\log_2 d\) cost of starting its own conditional: the effective memory
size is a data-determined quantity.

\paragraph{Describing the context set.}
The partition above was defined by the \(M\) most frequent context tokens
of the full corpus, so an honest code must also \emph{describe} which
tokens these are.  The overhead is small.  A two-part code that names the
\(512\) split contexts out of the \(d=10^5\) vocabulary costs
\(\log_2\binom{10^5}{512}\approx 4{,}600\) bits, about \(0.005\) bits per
token, raising the semi-adaptive total from \(7.100\) to \(7.105\); the
same header is owed by every per-state baseline, so no ranking in
Table~\ref{tab:memory} changes.  In the language of
Section~\ref{sec:framework} this header is simply \(-\log w\) of the
architecture under a prior over context sets that favors small ones:
naming a context costs about \(\log_2 d\) bits while a context worth
splitting saves hundreds, which is why the selection is nearly free.

\paragraph{Learning the context set online.}
The header can also be avoided entirely.  A context gets its own state
once its count so far crosses a threshold (the threshold itself is
averaged over a small grid, as before).  This code is fully sequential,
and it is still exactly computable, because which state a token goes to
depends only on the past.  It costs \(7.204\) bits per token, \(0.10\)
above the semi-adaptive \(7.100\): the price of learning the contexts as
it codes, with no header.  The list itself is a moving target: barely
two-thirds of the final ``top \(512\)'' is in place halfway through the
corpus.  What the code learns quickly and cheaply is the \emph{rule},
not the list.

\section{The Broader Setting: Architectures as Priors in Modern
Machine Learning}
\label{sec:broader}
We now place the results in the setting of Section~\ref{sec:framework}.

\paragraph{The structural claim.}
The claim of \cite{feder2025framework} is that \emph{layered} (deep)
architectures induce priors whose unit of mass is spread over a very wide
range of complexities.  Multiplying many independent random factors
produces heavy-tailed, nearly degenerate spectra; in linear networks these
are literally the spectra of products of random matrices, so typical draws
contain both large and extremely small directions.  A single deep prior
therefore assigns usable mass to simple targets (through the many
near-degenerate directions that can be ignored) and to complicated ones
(through the rare draws that activate many directions), and the regret
behaves non-uniformly, each target charged roughly its own complexity.  In
\cite{feder2025framework} this picture is supported by the spectra of
layered random maps and by experiments on trained networks.

\paragraph{What the LSA makes exact.}
The LSA was designed as plausibly the smallest nontrivial instance of this
picture, and the results of this paper exhibit it in closed form.  Depth
multiplies iid factors, and multiplication spreads log-weights.  The spread
tilts prior mass across target complexities (Figure~\ref{fig:spectrum}),
and the regret responds target by target exactly as
\eqref{eq:prior-mass-bound} prescribes.  A mixture across depths assembles
the family envelope into one predictor at vanishing cost, in the spirit of
hierarchical universal coding \cite{FederMerhavHierarchical}.  The scaling
law of Section~\ref{sec:scaling} yields a power-law loss curve whose
exponent is a property of the source (the symbol-discovery rate) rather
than of representation learning.  The memory experiments of
Section~\ref{sec:memory} extend the same accounting to a second
architectural axis, where the choice of marginal prior turns out to matter
more than the memory it is attached to.

\paragraph{What the model isolates.}
None of this involves training dynamics.  That is both the model's
limitation and the source of its solvability: it isolates the part of the
deep-learning story that does not require gradient descent---where mass
sits, not how it is found.

\paragraph{What exactness buys.}
The model plays the role that exactly solvable models play elsewhere: a place where the central quantities (complexity as prior mass, the
complexity range of an architecture, the regret it implies) become computable functions whose behavior can be checked
against real data, as the Bible experiments do.  The surprisal refinement
of Appendix~\ref{sec:app-surprisal} is an example.  For exchangeable
models, only part of a target's complexity depends on the architecture:
the surprisal \(-\log f_L(\op_0)\) of its shape under the density the
prior induces on shapes.  The rest is labeling information, the same
for every predictor.  The complexity range of an architecture is then
a concrete object: the family of surprisal curves it induces over
shapes.

\paragraph{Next steps.}
The natural next steps are the ones the model points to: richer
architectures (context trees with layered leaves, latent state, graphical models), and
identifying which parts of the exact accounting survive when the prior is
only sampled through training rather than integrated.

\section{Discussion}
\label{sec:discussion}
\paragraph{Similarity to, and difference from, empirical scaling laws.}
As in empirical studies \cite{kaplan2020scaling}, excess loss follows power
laws over substantial ranges, with a data exponent that improves
with the concentration of the source.  The difference is that here the
exponent is derived: it is the Zipf symbol-discovery rate, and the depth
parameter, once logarithmic, moves only constants.  The model thus offers a
first-principles existence proof that power-law loss curves need not encode
anything about representation learning; symbol discovery under a heavy-tailed
source suffices.  Conversely, exponents measured in real systems that
\emph{deviate} from discovery rates carry information beyond source
statistics.
\paragraph{The transition at \(c^\star\) and its (in)visibility.}
The freezing transition of the layered prior at
\(c^\star=(1-\gamma)^{-1}\)
\cite{jeong2026phase,derrida1981random,bovier2006statistical} is a
genuine phase transition of the architecture, yet the regret shows no
kink at \(c^\star\).  
Section~\ref{sec:scaling} explains why: the transition enters the regret
only through the price \(\price(c)\) of Equation~\eqref{eq:Ac}, which
reaches its floor of one flatly at \(c^\star\) and stays there, and a
quantity that is flat at the transition and constant beyond it cannot
produce a kink.
What the regret does show is the shallow side, where each missing
layer costs bits, and the slow drift of the offset \(B\).  
%Intrinsic statistics of a draw from the prior, the largest coordinate, the
%entropy, or the participation ratio $(\sum_i \theta_i^2)^{-1}$, an effective
%count of the coordinates carrying mass, remain the natural place to observe
%the transition directly, and measuring them
%alongside the regret is a natural next experiment.
\paragraph{Relation to competitive estimation theory.}
Good--Turing's near-oracle behavior is backed by worst-case competitive
guarantees \cite{orlitsky2015competitive,acharya2013tight}; the layered
mixture, by contrast, carries the non-uniform Bayesian guarantee
\eqref{eq:prior-mass-bound} and matches Good--Turing empirically on the
heavy-tailed part of the benchmark while losing on the flat part.  An
attractive open question is whether the depth-averaged LSA prior admits a competitive guarantee of the
Good--Turing type, e.g.\ regret against the natural oracle vanishing
uniformly over \(\Delta_d\), which would unify the two lines: a single
coherent prior with both MDL semantics and worst-case competitiveness.  The
uniform-target column of Table~\ref{tab:orlitsky} shows the present family
does not yet achieve this: Good--Turing's \(0.005\) bits there reflect
count-frequency information that no exchangeable-prior mixture in our family
currently exploits as efficiently.
\paragraph{Limitations.}
The study is numerical beyond \(L=1\); the scaling law
\eqref{eq:leading} is supported over finite ranges
(\(d\le10^6\), \(\Nseq\le10^4\)) rather than proved; targets are iid
with known alphabet size; and the comparison inherits the specific benchmark
choices of \cite{orlitsky2015competitive} (support \(10^4\), KL-to-truth
metric).  The numerical scheme itself is validated to the tolerances
reported in Appendix~\ref{sec:app-validation}.
\section{Conclusion}
\label{sec:conclusion}
We introduced a simple Bayesian estimator for probability estimation over
large alphabets.  At depth \(L\), the prior is obtained by multiplying
\(L\) independent uniform simplex draws coordinatewise and renormalizing.
The \(L=1\) member is exactly Laplace's add-one rule, while increasing depth
moves prior mass toward sparse and heavy-tailed distributions.  Averaging
over depths gives a single coherent predictor that adapts across target
distributions and is competitive with substantially more specialized
estimators on both synthetic benchmarks and real text.
The model is also sufficiently tractable to expose how its regret scales
with data, alphabet size, and depth.
For Zipf targets in the logarithmic-depth regime, as long as the
sample reveals only a small fraction of the alphabet, the regret closely matches the description length of the discovered symbol set, at one
bit of code per bit of description, plus a further cost per
discovered symbol;
the data exponent \(1-1/\alpha\) is the symbol-discovery rate, while
the depth coefficient changes only constants.  From the architectural perspective, the same
construction provides a solvable example of how layering creates a prior
with a broad complexity range: multiplication spreads prior mass from the
center of the simplex toward its sparse boundary, and averaging over depths
packages these complexity scales into one adaptive predictor.
The model claims neither to explain neural networks nor to replace
Good--Turing.  It shows, in a setting where everything can be computed, how
much of both stories (scaling laws and competitive large-alphabet
estimation) already follows from a single structural principle: multiply
enough independent random factors, and the resulting prior is broad enough
to charge each target roughly its own complexity.
\clearpage
\appendix
\section{Computing the Profile Weights \texorpdfstring{\(A_\lambda(p)\)}{A-lambda(p)}}
\label{sec:app-profile-weights}
Let \(\lambda\in\Lambda_\Nseq\) be a profile, described by its
multiplicities: \(c_r\) is the number of symbols observed exactly \(r\)
times, so \(\sum_r rc_r=\Nseq\).  Write
\(\mathcal R_\lambda=\{r:c_r>0\}\) for the count values that occur and
\(s=\sum_r c_r\) for the number of distinct observed symbols (\(A_\lambda=0\) if
\(s>d\)).
\subsection{Coefficient formula}
Each alphabet symbol is either unused, contributing \(1\), or used \(r\)
times, contributing \(p_i^r/r!\).  Therefore
\[
  A_\lambda(p)
  =
  \Gamma(\Nseq+1)\,
  [z^c]\prod_{i=1}^d
  \Bigl(
    1+\sum_{r\in\mathcal R_\lambda}
      z_r\frac{p_i^r}{\Gamma(r+1)}
  \Bigr),
\]
where \([z^c]\) extracts \(\prod_{r}z_r^{c_r}\): one formal variable per
distinct part size.  For the uniform target the coefficient is explicit,
\[
  A_\lambda(u)
  =
  \frac{\Gamma(\Nseq+1)}{d^\Nseq}
  \frac{d!}{(d-s)!}
  \frac{1}{\prod_{r\ge1}\Gamma(r+1)^{c_r}\prod_{r\ge1}c_r!},
\]
which serves as a validation case.

\subsection{Evaluation, and what the experiments use}
\label{sec:app-alambda-methods}
Read literally, the coefficient formula requires extracting a coefficient
from a product of \(d\) factors.  By symmetry it collapses to a polynomial
in the power sums \(P_r=\sum_{i=1}^d p_i^r\): the formula does not care
which symbol contributed which factor, and summing over the choice of
symbols produces exactly the power sums.  A standard dynamic program
assembles this polynomial at a cost that grows with the number of distinct
count values in \(\lambda\), not with \(d\); for very large profiles,
Poissonization makes the symbol counts independent and a saddlepoint
argument applies.  The experiments themselves do not need \(A_\lambda\) at
all: as explained in Section~\ref{sec:exact-regret}, the expectation over
profiles is estimated by drawing count vectors
\(M\sim\operatorname{Multinomial}(\Nseq,p)\) directly and averaging the
exact \(\log q\) values, which is unbiased for \eqref{eq:regret-counts}.
The \(A_\lambda\) machinery is retained because it validates the sampling
pipeline through the identity \(\sum_\lambda A_\lambda=1\) and the
uniform-target closed form.  The average description length
\(\E\log_2\binom{d}{K_N}\) used in Section~\ref{sec:scaling} is computed
over the same sampled profiles, and checked against the exact distribution
of \(K_N\) obtained under Poissonization by a one-pass recursion over the
alphabet.

\section{Computing the Mixture Weights \texorpdfstring{\(q_\lambda\)}{q-lambda} and Predictive Probabilities}
\label{sec:app-mixture-terms}
Using \eqref{eq:exp-rep} and the Gamma integral
\(1/\Gamma(\Nseq)\int_0^\infty t^{\Nseq-1}e^{-tS}\,dt=S^{-\Nseq}\) with
\(S=\sum_j Y_j\),
\begin{equation}
\label{eq:q-integral}
  q_\lambda
  =
  \frac{1}{\Gamma(\Nseq)}
  \int_0^\infty t^{\Nseq-1}
  \,\phi_0(t)^{\,d-s}
  \prod_{r\in\mathcal R_\lambda}\phi_r(t)^{c_r}
  \,dt,
  \qquad
  \phi_r(t)=\E\bigl[Y^r e^{-tY}\bigr],
\end{equation}
with \(Y\) a single product of \(L\) iid exponentials.  The kernel functions
obey the layer recursion
\begin{equation}
\label{eq:phi-recursion}
  \phi^{(\ell)}_r(t)
  =
  \int_0^\infty e^{-x}x^r\,\phi^{(\ell-1)}_r(tx)\,dx,
  \qquad
  \phi^{(1)}_r(t)=\frac{\Gamma(r+1)}{(1+t)^{r+1}},
\end{equation}
one integral per layer, independent of \(d\).
\subsection{Numerical scheme}
\label{sec:app-numerics}
All computations store \(h_r(u)=\log\phi_r(e^u)\) on a uniform grid in
\(u=\log t\).  The integrand of one layer step \eqref{eq:phi-recursion}
has two structures: a kernel peak of width \((r+1)^{-1/2}\), and, once
\(t\) is large, a slowly varying plateau that produces the far-tail
behavior \(\phi_r(t)\sim C_L\,(\ln t)^{L-1}t^{-(r+1)}\).  The scheme
integrates the two regions separately and blends them smoothly, with
interpolation for off-grid reads.  The outer integral
\eqref{eq:q-integral} is evaluated by scanning its log-integrand coarsely
and refining around the maximum.  

A predictive probability is the ratio of two such integrals differing in a
single kernel row. Write
\(\Psi(u)=n u+(d-s)h_0(u)+\sum_r c_r h_r(u)\) for the log-integrand of
\eqref{eq:q-integral} in the variable \(u=\log t\); the term \(nu\)
absorbs both \(t^{\,n-1}\) and the Jacobian \(dt=e^{u}\,du\). After
\(n\) observations, a symbol with count \(r\) has
\[
Q(x_{n+1}=i\mid x^n)
=
\frac{q_{m+e_i}}{q_m}
=
\frac{1}{n}
\cdot
\frac{\int e^{u}\,\bigl(\phi_{r+1}/\phi_r\bigr)\,
e^{\Psi(u)}\,du}{\int e^{\Psi(u)}\,du},
\]
i.e.\ the same integral with \(h_r\) replaced by \(h_{r+1}\) at one symbol
and one extra power of \(t\).
Grid biases common to numerator and denominator cancel; this cancellation
is certified by the normalization checks of
Appendix~\ref{sec:app-validation}.

\subsection{Heavy counts by anchor interpolation}
\label{sec:app-anchors}
Count values up to \(m_{\max}\approx2\cdot10^4\) arise on the benchmark of
Section~\ref{sec:comparison} (the top symbol of Zipf \(\alpha=5\) at
\(n=2\cdot10^4\); already Zipf \(\alpha=1.5\) produces counts near
\(8\cdot10^3\)). Tables store all small-count rows
exactly and anchor rows at geometric spacing above; a missing row is
interpolated through bracketing anchors after removing a known smooth
trend.  Direct tests against exactly built rows show errors
\(\sim10^{-4}\) in \(h_r\) and \(\sim2\cdot10^{-5}\) in the predictive
differences, negligible at the precision of the experiments.

\subsection{Exact kernel rows via Mellin Barnes contours}
\label{sec:app-exact-rows}
Absolute mixture likelihoods \(\log q_\lambda\), needed for the posterior
weights of the depth-mixture, amplify any error in \(h_0\) by the factor
\(d-s\) in \eqref{eq:q-integral} and in \(h_r\) (small \(r\)) by the
multiplicities \(c_r\).  Grid-recursion accuracy of order \(10^{-5}\),
ample for predictive ratios, is therefore insufficient at \(d=10^6\).  The
small-\(r\) kernels admit an exact remedy: since the Mellin transform of a
product of \(L\) unit exponentials is \(\Gamma(z)^L\),
\[
  \phi_r^{(L)}(t)
  =
  \frac{1}{2\pi i}\int_{c-i\infty}^{c+i\infty}
  \Gamma(s)\,\Gamma(r+1-s)^L\,t^{-s}\,ds,
  \qquad 0<c<r+1,
\]
with an integrand that decays like \(e^{-(L+1)\pi|\Im s|/2}\),
\emph{faster} at greater depth.  Two float64 pitfalls, catastrophic
cancellation at small \(t\) and integrand growth with depth, are avoided
by extracting the \(s=0\) pole (whose residue is exactly
\(\Gamma(r+1)^L\)) and by placing the contour so that \(\Re(r+1-s)\) sits
at the minimizer of \(\log\Gamma\).  The resulting rows show step-halving
agreement below \(3\cdot10^{-9}\) for all small \(r\) at every depth used
in the experiments (\(L\le138\)), match independent Meijer-G evaluations
where those converge,
 and
replace the recursion rows for small \(r\) in every stored table; the
residual contribution of the remaining recursion rows to
\(\log q_\lambda\) is \(\lesssim10^{-3}\) nats on the measured profiles.

\section{Validation and Reproducibility}
\label{sec:app-validation}
The implementation is validated by the following checks, all of which must
pass before any experiment is run.
\begin{center}
\small
\begin{tabular}{lll}
  \toprule
  check & compared against & agreement \\
  \midrule
  layer recursion \eqref{eq:phi-recursion}, \(L=2\) & adaptive quadrature & \(<2\cdot10^{-6}\) \\
  closed forms at \(L=1\) & Proposition~\ref{prop:laplace} & \(<2\cdot10^{-9}\) (relative) \\
  \(q_{(1)}=1/d\); \(d\,q_{(2)}+d(d-1)\,q_{(1,1)}=1\) & exchangeability & \(2\cdot10^{-5}\) nats (\(d=10^4\)); \(2\cdot10^{-3}\) (\(10^6\)) \\
  predictive normalization \(\sum_i\hat q(i)=1\) & measured profiles & \(10^{-5}\) (\(L=5\)); \(4\cdot10^{-4}\) (\(L=22\)) \\
  \(\sum_\lambda A_\lambda=1\); uniform closed form & Appendix~\ref{sec:app-profile-weights} & within tolerance \\
  \bottomrule
\end{tabular}
\end{center}
These checks bound the absolute \(\log q_\lambda\) error entering the
depth-mixture posteriors by \(\lesssim10^{-3}\) nats, and the KL error of
the reported comparisons by \(\lesssim10^{-3}\) bits, below all plotted
differences.

\paragraph{The batch codelength is the sequential log-loss.}
Each LSA predictor, including the depth average, is a single, fixed
probability assignment on token sequences, so by the chain rule its batch
codelength \emph{is} the accumulated sequential log-loss:
\(-\log_2 Q(x^n)=\sum_{t<n}-\log_2 Q(x_{t+1}\mid x^t)\), where the
conditional at step \(t\) is the posterior-weighted, depth-averaged
next-token probability given the past.  The profile formula of
Section~\ref{sec:exact-regret} merely evaluates the left side without
looping over the sequence; nothing about the numbers reported in
Section~\ref{sec:bible} depends on seeing the data in batch.  We also
verified the identity numerically, since the per-step conditionals and the
batch marginal are computed by different numerical routes: on Bible
prefixes the machinery's next-token predictive matches the marginal ratio
\(Q(x^{t+1})/Q(x^{t})\) to within \(10^{-12}\) at every tested step, the
accumulated per-step log-losses match the batch codelength to
\(3\cdot10^{-4}\) bits over thousands of steps (for single depths and for
the posterior-weighted depth average alike), and at \(L=1\) the sequential
machinery reproduces the add-one rule \((m_i+1)/(t+d)\) exactly.

Experiment protocols: all trials use fixed seeds; the comparison of
Section~\ref{sec:comparison} shares identical samples across estimators;
means and standard errors are over \(20\) trials or \(40\) profile draws
as stated in each caption; all depths come from a single recursion chain
per table set; regret and KL are converted to bits by division by
\(\ln2\).

\section{Surprisal of an Induced Order Parameter}
\label{sec:app-surprisal}
This appendix rephrases the back-of-the-envelope computation of
Section~\ref{sec:spectrum} as a statement about an \emph{induced density}.
Let $\op(\theta)$ be an order parameter of the drawn predictor---a
low-dimensional summary of $\theta$, such as its entropy or its sorted
weight vector---and let $f_L$ denote the density of $\op(\theta)$ when
$\theta \sim \mathcal{M}_{d,L}$.  The \emph{surprisal} \(-\log f_L(\op_0)\),
evaluated at the value \(\op_0\) equivalent to the target, measures the
prior-tilted complexity at that level of description, and it may describe
the operative complexity more accurately than the raw neighborhood mass
\(-\log w(\Theta_0^{\epsilon})\) of \eqref{eq:prior-mass-bound}.  The reason
is the exchangeable structure of the model.  For a permutation-invariant
prior, the mass of a neighborhood of a \emph{specific} target \(\theta_0\)
splits, up to the resolution \(d\op\) at which the order parameter is
specified (a term common to all targets in a given shape family), as
\[
  -\log w\bigl(\Theta_0^{\epsilon}\bigr)
  \;\approx\;
  -\log f_L(\op_0)\;+\;\log \Mlab(\theta_0),
\]
where \(\Mlab(\theta_0)\) (at most \(d!\)) counts the distinct relabelings
of \(\theta_0\): the level set \(\{\op(\theta)\approx\op_0\}\), whose mass
is \(f_L(\op_0)\,d\op\), consists of \(\Mlab(\theta_0)\) equally weighted
permuted copies of the neighborhood.  The labeling term
\(\log \Mlab(\theta_0)\) is common to every target of the same shape, and
it is exactly the information that the data must convey under \emph{any}
code, which symbols are the frequent ones, so it does not separate
predictors; indeed the exact regret \eqref{eq:regret-profiles} depends on
the target only through its shape.  What depth acts on is the other term:
increasing \(L\) tilts the induced density \(f_L\) toward sparse,
low-entropy values of \(\op\), lowering the surprisal of concentrated
shapes and raising that of flat ones.  The back-of-the-envelope estimate of
Section~\ref{sec:spectrum} is precisely a large-deviation evaluation of
\(-\log f_L\) for the shape parameter: with \(\log Y_i\) a sum of \(L\) iid
\(\log\)-exponentials, whose cumulant generating function is
\(\log\Gamma(1+s)\), Cram\'er's theorem gives \(-\log f_L\) as \(L\) times
the conjugate rate function evaluated at the required log-weight per layer,
and the Gaussian and large-deviation regimes quoted there are its two
branches, the second of which is what produces the transition at
\(c^\star\) in Section~\ref{sec:scaling}.  The panels of
Figure~\ref{fig:spectrum} can then be read as the depth-tilted surprisal
plotted across the one-parameter shape family \(\op_0=\alpha\).

\section*{Acknowledgements}
 
We are grateful to Prof.\ Gregory Wornell. A question he asked one of us (MF) sparked the line of inquiry that led to this work. We also thank Ryu Jeong for providing
the asymptotic analysis of the layered prior \cite{jeong2026phase} and for discussions on the presented model.
 
The numerical computations reported in this paper, including the layer
recursions, the contour-integral evaluations, and the experiments of
Sections~\ref{sec:spectrum}--\ref{sec:comparison}, were carried out with
the assistance of Claude Fable~5 (Anthropic), which also assisted in the
writing of the manuscript.  All computations are validated by the checks
of Appendix~\ref{sec:app-validation} and are reproducible from the accompanying code package at
\url{https://github.com/urbanke/layered-simplex-architecture}; responsibility for the content rests entirely
with the authors.

\begingroup
\small
\bibliographystyle{plain}
\bibliography{references}
\endgroup
\end{document}